\definecolor{linkColor}{RGB}{0, 128, 128}
\definecolor{citeColor}{RGB}{0, 112, 64}
\definecolor{urlColor}{RGB}{120, 0, 120}
\theoremstyle{plain}
\newtheorem{thm}{Theorem}
\newtheorem{lem}[thm]{Lemma}
\newtheorem{clm}[thm]{Claim}
\theoremstyle{definition}
\newtheorem{defn}{Definition}
\newcommand{\ccO}{\mathscr{O}}
\newcommand{\ccP}{\mathscr{P}}
\newcommand{\ccQ}{\mathscr{Q}}
\newcommand{\cA}{\mathcal{A}}
\renewcommand{\>}{\rangle}
\newcommand{\<}{\langle}
\DeclareMathOperator{\tr}{\mathrm{Tr}}
\DeclareMathOperator{\Tr}{\mathrm{Tr}}
\newcommand{\OO}{\mathrm{O}}
\newcommand{\qAnd}{\quad\text{and}\quad}
\newcommand{\qqAnd}{\qquad\text{and}\qquad}
\newcommand{\wo}[1]{z_{#1}} 
\title{Hybrid Quantum-Classical Search Algorithms}
\author{Ansis Rosmanis\thanks{E-mail: \texttt{rosmanis@math.nagoya-u.ac.jp}}\\ [.5ex]
\normalsize  Graduate School of Mathematics \\ 
\normalsize Nagoya University}
\date{June 6, 2023}
\begin{document}

\maketitle

\begin{abstract}
Search is one of the most commonly used primitives in quantum algorithm design. It is known that quadratic speedups provided by Grover's algorithm are optimal, and no faster quantum algorithms for Search exist. While it is known that at least some quantum computation is required to achieve these speedups, the existing bounds do not rule out the possibility of an equally fast hybrid quantum-classical algorithm where most of the computation is classical. In this work, we study such hybrid algorithms and we show that classical computation, unless it by itself can solve the Search problem, cannot assist quantum computation. In addition, we generalize this result to algorithms with subconstant success probabilities.
\end{abstract}

\section{Introduction}

In recent years, quantum computing has come closer to practicality. However, the speed of its operations and precision is still nowhere near to what the classical computers can do, so it is important, especially in the near future, to take as much advantage of classical computation as possible. Therefore it is essential to understand what parts of hybrid quantum-classical computation can be delegated to classical processing, and what parts are essentially quantum. Especially for tasks that are omnipresent in quantum computation.

The Search is one of the most fundamental and most studied problems in the quantum computing, with the famous Grover's algorithm achieving quadratic speedup over purely classical computation~\cite{grover:search}. However, it is still not fully understood under what conditions this quadratic speedup can be preserved.\footnote{For example, we know that it is not preserved under parallelization~\cite{zalka:Grover} and under certain noise models~\cite{regev:faultySearch}.} In this work, we investigate if one can save on the amount of quantum computation necessary for Search by employing classical computation. We show that the answer to this question is essentially ``no'': asymptotically, no amount of classical computation can reduce the number of quantum operations needed, unless this classical computation on its own can solve the Search problem.

More precisely, we consider search over a domain of size $n$, where our goal is to find a marked element $i$. Whether an element is marked or not is given by a black-box function $x$, which is the input of the problem. We consider algorithms that are given both an access to a quantum oracle for $x$ that can evaluate $x$ in a superposition as well as a classical access to $x$, which delegates the computation of $x$ to a classical oracle. While it is well known that any query algorithm for Search has to evaluate $x$ at least $\Omega(n^{1/2})$ times, one may ask: Can most of those evaluations be classical? For example, could it be possible to construct a hybrid quantum-classical search algorithm that uses $n^{1/3}$ quantum queries and $n^{1/2}$ classical queries to $x$? We answer this question in negative: any hybrid quantum-classical algorithm for Search needs to perform at least $\Omega(n^{1/2})$ quantum queries or at least $\Omega(n)$ classical queries.

\begin{thm}
\label{thm:mainTechA}
Suppose the algorithm runs on an input $x$ of Hamming weight $1$, makes $\tau_c$ classical queries and $\tau_q$ quantum queries, and finds the unique marked element with probability at least $1-\epsilon$. Then $\tau_c+4\sqrt{n}\tau_q \ge n(1-2\sqrt{\epsilon}-4/n^{1/4})$.
\end{thm}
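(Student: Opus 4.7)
The plan is to track a progress measure comparing the algorithm's trajectories on the reference input $x^{(0)}=0^n$ and on each weight-$1$ input $x^{(k)}$, bound how much each oracle call can advance this measure, and combine with the success guarantee. Crucially, on $x^{(0)}$ every oracle call preserves purity: $P_{x^{(0)},0}=I\otimes\pmb{0}$ (deterministically appends $\pmb{0}$) and $Q_0=I$. Let $\psi^t_0$ denote this pure unit state after $t$ oracle calls. On $x^{(k)}$, let $\psi^t_k$ be the \emph{unnormalized} pure vector obtained by replacing each pseudo-classical query by the Kraus operator $P_{x^{(k)},0}=(I-\pmb{k}\pmb{k}^*)\otimes\pmb{0}$ (without renormalizing) and each quantum query by $Q_k$; the input-independent unitaries $U_t$ act identically on both trajectories. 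Then $\|\psi^t_k\|^2$ is the probability that no pseudo-classical query has yet returned $1$, the true mixed state satisfies $\rho^t_k\succeq\psi^t_k(\psi^t_k)^*$, and the success probability on $x^{(k)}$ is at most $\|\Pi_k\psi^\tau_k\|^2+(1-\|\psi^\tau_k\|^2)$. Define $\Phi^t:=\sum_{k=1}^n\|\psi^t_0-\psi^t_k\|^2$, with $\Phi^0=0$.

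For a pseudo-classical query, the identity $\psi^{t+1}_0-\psi^{t+1}_k=\bigl((I-\pmb{k}\pmb{k}^*)(\psi^t_0-\psi^t_k)+\pmb{k}\pmb{k}^*\psi^t_0\bigr)\otimes\pmb{0}$ together with the orthogonality of its two summands gives $\|\psi^{t+1}_0-\psi^{t+1}_k\|^2=\|\psi^t_0-\psi^t_k\|^2+\|\pmb{k}^*\psi^t_0\|^2-\|\pmb{k}^*(\psi^t_0-\psi^t_k)\|^2$, so $\Phi^{t+1}-\Phi^t\le\sum_k\|\pmb{k}^*\psi^t_0\|^2=1$. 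For a quantum query, $\psi^{t+1}_0-\psi^{t+1}_k=Q_k(\psi^t_0-\psi^t_k)+(Q_0-Q_k)\psi^t_0$ with $Q_0-Q_k=\pmb{k}\pmb{k}^*\otimes(I-X)_{\text{target}}$, and $\|I-X\|_{\text{op}}=2$ gives $\sum_k\|(Q_0-Q_k)\psi^t_0\|^2\le 4\sum_k\|\pmb{k}^*\psi^t_0\|^2=4$. Minkowski's inequality in $\ell_2$ then produces $\sqrt{\Phi^{t+1}}\le\sqrt{\Phi^t}+2$. A direct induction on the number of queries processed (regardless of the fixed order) yields $\sqrt{\Phi^\tau}\le\sqrt{\tau_c}+2\tau_q$.

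To lower-bound $\Phi^\tau$, set $q_k:=\|\Pi_k\psi^\tau_0\|^2$ (so $\sum_k q_k\le 1$ since $\psi^\tau_0$ is a unit vector), $a_k:=\|\psi^\tau_k\|^2$, and $c_k:=\|\Pi_k\psi^\tau_k\|^2$. Since the success probability on $x^{(k)}$ is at most $c_k+(1-a_k)$ and at least $1-\epsilon$, we get $c_k\ge a_k-\epsilon$. The decomposition $\<\psi^\tau_0,\psi^\tau_k\>=\<\Pi_k\psi^\tau_0,\Pi_k\psi^\tau_k\>+\<(I-\Pi_k)\psi^\tau_0,(I-\Pi_k)\psi^\tau_k\>$, combined with Cauchy-Schwarz on each summand and $a_k-c_k\le\epsilon$, gives $\Re\<\psi^\tau_0,\psi^\tau_k\>\le\sqrt{q_k c_k}+\sqrt{(1-q_k)\epsilon}$. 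Summing over $k$ and applying Cauchy-Schwarz to each of the two resulting sums yields $\sum_k\Re\<\psi^\tau_0,\psi^\tau_k\>\le\sqrt{\sum_k a_k}+n\sqrt{\epsilon}$. Hence $\Phi^\tau=n+\sum_k a_k-2\sum_k\Re\<\psi^\tau_0,\psi^\tau_k\>\ge n+(A-2\sqrt{A})-2n\sqrt{\epsilon}$ for $A:=\sum_k a_k$, and $A-2\sqrt{A}\ge-1$ globally yields $\Phi^\tau\ge n(1-2\sqrt{\epsilon})-1$.

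Combining $(\sqrt{\tau_c}+2\tau_q)^2\ge\Phi^\tau\ge n(1-2\sqrt{\epsilon})-1$ with a short case split on whether $\sqrt{\tau_c}+\tau_q\le\sqrt{n}$ (yielding $\tau_c+4\sqrt{n}\tau_q\ge\tau_c+4\tau_q(\sqrt{\tau_c}+\tau_q)=(\sqrt{\tau_c}+2\tau_q)^2$) or not (yielding $\tau_c+4\sqrt{n}\tau_q\ge n$ directly, via $\tau_c>(\sqrt{n}-\tau_q)^2$ when $\tau_q<\sqrt{n}$) establishes the theorem, since $n(1-2\sqrt{\epsilon})-1\ge n(1-2\sqrt{\epsilon}-4/n^{1/4})$ for all $n\ge 1$. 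I expect the technical crux to be the lower-bound step: the success guarantee refers to the full mixed state $\rho^\tau_k$, whereas the progress measure tracks only the dominant pure component $\psi^\tau_k$, and bridging this gap hinges on the observation that the ``detected'' branch can contribute at most $1-a_k$ to the output, forcing the pure component to supply $c_k\ge a_k-\epsilon$ of the success.
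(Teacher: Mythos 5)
Your proof is correct, and while it is built on the same core object as the paper---the squared Euclidean distance between the pure trajectory on $0^n$ and the dominant pure component $\psi_k^t$ obtained by always applying the Kraus operator $P_{k,0}$ (your $\Phi^t$ is exactly $n$ times the paper's $H^{(t)}$, and your pseudo-classical step reproduces Claim~\ref{clm:progQ})---you diverge from the paper in two genuine ways. First, for quantum queries the paper proves the additive per-query bound $H^{(t)}\le H^{(t-1)}+4/\sqrt{n}$ via Cauchy--Schwarz, whereas you use the BBBV-style Minkowski bound $\sqrt{\Phi^{t+1}}\le\sqrt{\Phi^t}+2$, which after the interleaving induction gives $\Phi^\tau\le(\sqrt{\tau_c}+2\tau_q)^2$; this is tighter (it is closer in spirit to the machinery the paper reserves for Theorem~\ref{thm:mainTechB}), but it forces your final case split on $\sqrt{\tau_c}+\tau_q$ versus $\sqrt{n}$ to recover the stated form $\tau_c+4\sqrt{n}\tau_q\ge n(\cdot)$, a step the paper avoids because its additive bound directly matches that form. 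Second, and more substantively, for the endgame the paper routes the finding version through the $A,B$ progress measures (Claim~\ref{clm:kFailure}, Lemma~\ref{lem:avgFail}, and the bound $A\ge(1-\sqrt{H})^2$ in Section~\ref{ssec:decProofEnd}), whereas you give a self-contained argument: from $\rho_k^\tau\succeq\psi_k^\tau(\psi_k^\tau)^*$ you extract $\|\Pi_k\psi_k^\tau\|^2\ge\|\psi_k^\tau\|^2-\epsilon$ and then split $\<\psi_0^\tau,\psi_k^\tau\>$ along the measurement projector $\Pi_k$, obtaining $\Phi^\tau\ge n(1-2\sqrt{\epsilon})-1$. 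This is cleaner for the purpose at hand (indeed your additive error $1$ beats the paper's $4n^{3/4}$, so your inequality is slightly stronger than the stated theorem), at the cost of not producing the $A,B$ framework that the paper reuses for the subconstant-success regime. All the individual steps check out: the orthogonal-summand identity for $P_{k,0}$, the operator-norm-$2$ bound on $Q_0-Q_k$, the interleaved induction giving $\sqrt{\tau_c}+2\tau_q$, the inequalities $\sum_k\|\Pi_k\psi_0^\tau\|^2\le 1$ and $\sum_k\sqrt{q_kc_k}\le\sqrt{A}$, and the concluding case analysis.
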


While this theorem tells us, as for our example, that a hybrid algorithm that uses $\tau_q=n^{1/3}$ quantum queries and $\tau_c=n^{1/2}$ classical queries cannot solve the Search problem, it does not tell us what are the best possible chances of success of such an algorithm. To be more precise, because the success probability of an algorithm can be boosted by running it multiple times, the theorem above tells us that the success probability of the algorithm is at most $\mathrm{O}(\tau_q/n^{1/2}+\tau_c/n)$, which for our example would be $\mathrm{O}(n^{-1/6})$. If there were only quantum queries available, a better bound on the success probability could be proven using the amplitude amplification, but the presence of classical queries thwarts such an approach. Nevertheless, using a direct approach, we still succeed at proving the following tight bound, which states that the success probability of a hybrid $\tau_q$ quantum query and $\tau_c$ classical query algorithm for Search is at most $\mathrm{O}(\tau_q^2/n+\tau_c/n)$.

\begin{thm}
\label{thm:mainTechB}
Suppose the algorithm runs on an input $x$ of Hamming weight $1$ and makes $\tau_c$ classical queries and $\tau_q$ quantum queries. The probability that the algorithm finds the unique marked element is at most $(2\sqrt{\tau_c}+2\tau_q+1)^2/n$.
\end{thm}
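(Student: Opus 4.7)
The plan is to compare the algorithm's execution on $x^{(0)}=0^n$ with its executions on the single-marked inputs $x^{(k)}$ for $k\in\{1,\ldots,n\}$. Since $P_{0,0}$ just appends $\pmb{0}$ to the target, the state $\psi^t$ on $0^n$ remains a pure vector throughout. On $x^{(k)}$ the Kraus decomposition of $\ccP_k$ splits the trajectory into a ``no-detection'' branch (Kraus operator $P_{k,0}$) and a ``detection'' branch; I would let $\zeta_k^t$ denote the subnormalized pure state of the no-detection branch, so $\zeta_k^0=\psi^0$ and $1-\|\zeta_k^t\|^2$ is the probability of having been detected by time $t$.

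I would then introduce two progress measures,
\[
A^t:=\sqrt{\sum_{k=1}^n \|\zeta_k^t-\psi^t\|^2},
\qquad
B^t:=\sum_{k=1}^n \bigl(1-\|\zeta_k^t\|^2\bigr),
\]
both zero at $t=0$. The success probability on $x^{(k)}$ is at most $\|\Pi_k\zeta_k^\tau\|^2+(1-\|\zeta_k^\tau\|^2)$; summing over $k$, using $\sum_k\Pi_k\le I$ and Minkowski's inequality, gives $\sum_k p_k\le (1+A^\tau)^2+B^\tau$. By a symmetrization argument (pre-composing the algorithm with a uniformly random permutation of the $n$ indices and post-composing with its inverse) the worst-case success probability equals the average, which yields $\sqrt{pn}\le 1+A^\tau+\sqrt{B^\tau}$.

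Next comes the per-step analysis. For a quantum query, anchoring the triangle inequality at $\psi^t$ and using $Q_0=I$ gives
\[
\|\zeta_k^{t+1}-\psi^{t+1}\|\le \|\zeta_k^t-\psi^t\| + 2\,\|\pmb{k}\pmb{k}^*\psi^t\|,
\]
from which $\sum_k\|\pmb{k}\pmb{k}^*\psi^t\|^2=1$ and Minkowski yield $A^{t+1}\le A^t+2$, while $B^{t+1}=B^t$. For a pseudo-classical query, using $P_{k,0}=(I-\pmb{k}\pmb{k}^*)\otimes\pmb{0}$ and expanding the resulting square gives the exact equalities
\[
(A^{t+1})^2-(A^t)^2 = 1-E^t,
\qquad
B^{t+1}-B^t = D^t,
\]
where $D^t:=\sum_k\|\pmb{k}\pmb{k}^*\zeta_k^t\|^2$ and $E^t:=\sum_k\|\pmb{k}\pmb{k}^*(\zeta_k^t-\psi^t)\|^2$. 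One more Minkowski inequality produces the key tradeoff $\sqrt{D^t}\le 1+\sqrt{E^t}$, i.e.\ $E^t\ge(\sqrt{D^t}-1)_+^2$: a pseudo-classical query making a large contribution to $B$ is forced to make a small contribution to $A^2$.

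The last step is to aggregate these per-step relations into $A^\tau+\sqrt{B^\tau}\le 2\tau_q+2\sqrt{\tau_c}$, which combined with $\sqrt{pn}\le 1+A^\tau+\sqrt{B^\tau}$ gives the theorem. Quantum queries contribute at most $2\tau_q$ to $A^\tau$ and $0$ to $B^\tau$, so everything hinges on the pseudo-classical contribution. I expect this aggregation to be the main obstacle: naively tracking $A^t+\sqrt{B^t}$ as a single potential picks up extra additive and multiplicative constants per step. Instead I would aim to apportion each pseudo-classical query's contribution between $A$ and $\sqrt{B}$ carefully via Cauchy--Schwarz, exploiting $\sum_{t\in T_c}D^t=B^\tau$ together with $E^t\ge(\sqrt{D^t}-1)_+^2$, or equivalently to work with an amortized potential such as $\sqrt{(1+A^t)^2+B^t}$ that absorbs the tradeoff directly and yields the desired coefficient $2\sqrt{\tau_c}$.
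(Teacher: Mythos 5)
Your setup is sound and close in spirit to the paper's: the ``no-detection'' branch $\zeta_k^t$ is exactly the paper's $\psi_k^t$, purity on $0^n$ is used the same way, your bound $p_k\le\|\Pi_k\zeta_k^\tau\|^2+(1-\|\zeta_k^\tau\|^2)$ and the reduction to bounding $1+A^\tau+\sqrt{B^\tau}$ are correct, and your per-step computations (quantum: $A^{t+1}\le A^t+2$, $B$ unchanged; pseudo-classical: $(A^{t+1})^2=(A^t)^2+1-E^t$, $B^{t+1}=B^t+D^t$, $\sqrt{D^t}\le 1+\sqrt{E^t}$) are all valid. The gap is precisely the aggregation step you flagged, and it is not just a matter of bookkeeping: the per-step relations you have are genuinely too weak to give $A^\tau+\sqrt{B^\tau}\le 2\sqrt{\tau_c}+2\tau_q$. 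Within your constraint system an adversary may take every pseudo-classical step with $D^t=(1+1/\sqrt{2})^2$ and $E^t=(\sqrt{D^t}-1)^2=1/2$, so that $(A^\tau)^2$ still grows by $1/2$ per step while $B^\tau$ grows by about $2.91$ per step; after $\tau_c$ such steps $A^\tau+\sqrt{B^\tau}\approx(1/\sqrt{2}+1+1/\sqrt{2})\sqrt{\tau_c}=(1+\sqrt{2})\sqrt{\tau_c}>2\sqrt{\tau_c}$. Hence neither a Cauchy--Schwarz apportioning using $\sum_{t\in T_c}D^t=B^\tau$ nor the potential $\sqrt{(1+A^t)^2+B^t}$ can close the argument from what you have; at best you recover an $\mathrm{O}((\sqrt{\tau_c}+\tau_q)^2/n)$ bound with a worse constant, not the stated $(2\sqrt{\tau_c}+2\tau_q+1)^2/n$ (and the paper notes these constants are optimal, so the loss is not recoverable by tightening the same inequalities).

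The missing ingredient is a potential that is \emph{consumed} rather than monotonically accumulated. Your $B^t$ (total detection probability) never decreases, and the only brake on charging it is the within-step Minkowski bound, which is what lets the adversary above overcharge. The paper instead takes as its second measure the squared norm of the component of $\zeta_k$ orthogonal to $\psi_0$ (per $k$, averaged), and proves a per-step tradeoff of a different shape: a pseudo-classical query can decrease the overlap measure by $2\sqrt{z/n}$ only at the cost of \emph{decreasing} this potential by about $z$ (while charging it by at most $1/n$), and a quantum query can charge it only at rate $\sqrt{B/n}$. A global accounting (Lemma~\ref{lem:SumSqB} and the claim following it) then bounds total consumption by total charging, which is where the coefficients $2\sqrt{\tau_c}$ and $2\tau_q$ come out exactly. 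To repair your route you would need an analogous statement tying a large $D^t$ (or a large decrease in overlap) to the expenditure of some quantity that quantum queries can only replenish slowly, rather than to the one-shot inequality $E^t\ge(\sqrt{D^t}-1)^2$.
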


Thus, even in scenarios where one is concerned with small success probabilities, as, for example, for cryptographic applications, a hybrid algorithm cannot substantially benefit from classical queries.


\subsection{Techniques}

We could broadly divide quantum query lower bound techniques into two groups.
One group is that of polynomial methods~\cite{beals:pol,aaronson:laurent}, which utilize the fact that the output probability of the algorithm can be represented by a polynomial in input values. The maximum possible degree of this polynomial is related to the number of queries performed, and the lower bounds are obtained by showing that polynomials approximating the value of the problem must have large degree.
The other group is progress-based methods, which define a certain progress function that is based on comparing the execution of the algorithm on various inputs.%
\footnote{The input might also be switched in the midst of computation, as in the hybrid method~\cite{bennett:searchLowerB}.}
The lower bounds are proven by showing both that, in order for the algorithm to achieve the desired success guarantees, the final value of the progress function must differ significantly from its initial value and that a single oracle call cannot change this value by much. This latter group contains the so called ``hybrid method''~\cite{bennett:searchLowerB}, which first showed the asymptotic optimality of Grover's algorithm, various methods for proving exact optimality of Grover's algorithm~\cite{zalka:Grover,dohotaru:GroverOpt}, the adversary method~\cite{ambainis:adversary} and many of its generalizations, and other methods. The lower bounds in this paper are also based on certain progress measures, some of which have been already considered before, some of which, as far as we know, are new.

Most progress-based methods use the fact that, for quantum query algorithms, without loss of generality the quantum memory remains in a pure state throughout the execution of the algorithm. However, for the hybrid algorithms considered here, this is not necessarily the case due to the classical queries. To overcome this complication, we slightly strengthen the classical oracle by giving it some quantum features of the standard quantum oracle, but still not making it too strong to prove the desired bounds. We call it the \emph{pseudo-classical} oracle, and it has a very useful property that it preserves purity when the input is the all-zeros string $0^n$.%
\footnote{Here, as is customary, we consider the input function $x\colon\{1,\ldots,n\}\rightarrow\{0,1\}\colon i\mapsto x_i$ to be given as an $n$-bit string $x_1\ldots x_n$. An index $i$ is marked if $x_i=1$.}
 In addition, when the algorithm is run on inputs with Hamming weight $1$, that is, inputs with a unique marked element, each oracle call introduces only a small amount of entropy on average. Namely, while the memory might be in a mixed state, when interpreting this mixed state as a convex combination of pure density matrices, one pure term clearly dominates.

To prove the bound for the large (i.e., constant) success probability case, Theorem~\ref{thm:mainTechA}, we use a progress measure that is solely based on the Euclidean distances between the pure state corresponding to input $0^n$ and the ``dominant pure component'' corresponding to inputs of Hamming weight $1$. However, this progress measure fails when we want to consider small (i.e., subconstant) success probabilities. In that case, we proceed by introducing two separate progress measures, and track how queries to quantum and pseudo-classical oracles may affect them. For pseudo-classical queries, there is a tradeoff between possible changes in those two progress measures, and dealing with this tradeoff is the technically most involved part of this work.

%

\section{Connections to other works}

The pseudo-classical oracle, independently from the current work and under the name of semi-classical oracle, was introduced by Ambainis, Hamburg, and Unruh~\cite{ambainis:semiclassical}. They used this oracle to improve upon certain security bounds in post-quantum cryptography. In regards to the Search problem, similarly to the present work, they showed that a quantum algorithm that queries the pseudo-classical oracle $\tau_p$ times cannot find a marked element with probability more than $4\tau_p/n$. They did not, however, consider search with access to both the quantum and pseudo-classical oracle.

Our progress function for the large success probability case---the Euclidean distance between two vectors, not necessarily of unit norm, squared---is used by Regev and Schiff~\cite{regev:faultySearch} when addressing Search with faulty oracle. 
Boyer, Brassard, H{\o}yer, and Tapp used the same quantity, except for unit vectors, to show an almost tight (up to less than three percent) lower bound for Search~\cite{boyer:groverTight}.

Before introducing tailor-made methods for the problem at hand, it is natural to ask if we can use some of the existing general lower bound frameworks, such as the polynomial method or the adversary method, to prove the desired bounds. It is not clear if these methods apply, however, at least some specific cases can be handled by them. In particular, consider the scenario where all $\tau_c$ classical queries are performed before any quantum queries. In this case, it is easy to see that the best thing that can be done by the classical queries is to uniformly at random choose $\tau_c$ indices and query the input $x$ at them. This random procedure can be encoded in a pure, $x$-dependent state that, when measured in the standard basis, would provide these $\tau_c$ random index-value pairs. With such a state, the tight lower bound (for the large success probability case) can be obtained by the standard state-conversion adversary bound~\cite{lee:stateConversion}. It is reasonable to believe that a lower bound based on Laurent polynomials~\cite{aaronson:laurent} could also handle this special case.

Query lower bounds for algorithms with access to multiple oracles was also studied in~\cite{kimmel:OraclesWCosts, belovs:approxCount}. However, those works only considered quantum oracles. In particular, for every oracle, one also had access to its inverse.

In the near future, quantum computers are expected to outperform their classical counterparts for only a limited number of tasks, therefore considering hybrid computation is a natural research direction.
While this work focusses on quantum algorithms that have access to both the quantum and the classical oracle, various other hybrid quantum-classical models of computation have been considered before (see, for example,~\cite{FarhiGG2014:QAOA,ChiaCL:NeedDepth:2023}). In fact, even most fault-tolerant quantum computation schemes involve classical components and can thus be thought of as hybrid quantum-classical computation.

\paragraph{Recent developments.}

While in this work we have shown that there are no non-trivial tradeoffs between quantum and classical query complexity for the Search problem, there are fundamental problems for which such tradeoffs exist. Consider the Collision problem, which asks to distinguish if a function is one-to-one or two-to-one given that either is the case. Due to the birthday bound, the classical query complexity of this problem $\Theta(n^{1/2})$, while the quantum query complexity is $\Theta(n^{1/3})$ \cite{brassard:collision,aaronson:collisionLowerOriginal,shi:collisionLowerOriginal}. However, there exists a hybrid quantum-classical algorithm that uses $\tau_c$ classical queries and $\tau_q=O(\sqrt{n/\tau_c})$ quantum queries. 
An earlier version of this work asked the following question: Can we show that this tradeoff $\tau_q\sqrt{\tau_c}=O(\sqrt{n})$ is optimal when $\tau_q < \tau_c <  o(\sqrt{n})$?

Recently,  Hamoudi, Liu, and Sinha \cite{hamoudi:2022:tradeoffs} studied a very similar problem, essentially answering that question in affirmative. More precisely, they considered a version of the Collision problem where one is given the quantum and the classical oracle access to the same uniformly random function $f\colon\{1,\ldots,m\}\rightarrow\{1,\ldots,n\}$ and one has to find a collision in this function. They showed that a quantum query algorithm that makes $\tau_c$ classical queries and $\tau_q$ quantum queries has the success probability of finding the collision no more than $\OO((\tau_c^2+\tau_c\tau_q^2+\tau_q^3)/n)$, which is tight~\cite{brassard:collision}. They also showed that, in the same random function setting, the success probability of Search is at most $\OO((\tau_c+\tau_q^2)/n)$, essentially providing an alternative proof of Theorem~\ref{thm:mainTechB}.
Their proof techniques are different from the ones considered here and are based on 
 Zhandry's compressed oracle framework~\cite{zhandry:record}.

\section{Model of Computation}

We assume that the reader is familiar with basic concepts of quantum computation. For an introductory text, see, for example~\cite{NielsenChuang}.

\subsection{Quantum Memory}

The memory of a quantum algorithm is organized in registers. Each register is associated with some finite set $S$ and a complex Euclidean space of dimension $|S|$, denoted $\mathbb{C}^{S}$. The standard basis of this space is some fixed orthonormal basis whose vectors are uniquely labeled by the elements of $S$. The pure states of the register are denoted by (column) unit vectors in $\mathbb{C}^{S}$. A qubit is a register associated with the set $\{0,1\}$, and multiple qubits can be grouped together into a larger register.

For $M$ a matrix or a column vector, let $M^*$ denote its complex conjugate transpose.
We use the bold font to denote standard basis vectors. That is, given $i\in S$,  the vector $\pmb{i}$ corresponds to $|i\>$ in Dirac's notation, and thus $\pmb{i}\pmb{i}^*$ is equivalent to $|i\>\<i|$.  We use regular (i.e., not bold) letters $\psi$ and $\zeta$ to denote other vectors, which are not required to be one of the standard basis vectors, nor to have the unit length. We may also write a pure state $\psi$ as its corresponding density operator $\psi\psi^*$. The quantum memory can also be in a mixed-state, in which case its corresponding density operator is a convex combination over pure states $\psi\psi^*$.

The memory of a quantum algorithm typically consists of multiple registers, and the state of the entire memory is a density operator on the tensor product of the Euclidean spaces corresponding to each register. The evolution of quantum memory is governed by completely-positive trace-preserving super-operators, an evolution governed by unitary operators being a special case. When an operator or a super-operator acts as the identity on some registers, we typically omit those registers from the notation.

For density operators $\rho$ and $\sigma$ acting on the same space, the trace distance between them is $D(\rho,\sigma):=\Tr[|\rho-\sigma|]/2$ and the fidelity between them is $F(\rho,\sigma):=(\Tr[\sqrt{\rho^{1/2}\sigma\rho^{1/2}}])^2$. For a pure state $\psi\psi^*$, we have $F(\rho,\psi\psi^*)=\psi^*\rho\psi$. 

\subsection{Oracles}
\label{ssec:oracles}

We consider algorithms whose inputs are $n$-bit strings; equivalently, we may think of them as functions from $\{1,\ldots,n\}$ to $\{0,1\}$. For an input $x\in\{0,1\}^n$, we call an index $i\in\{1,\ldots,n\}$ \emph{marked} if $x_i=1$.
A query algorithm accesses the input through calls to various oracles. We use terms ``oracle calls'' and ``queries'' interchangeably. 
We consider three types of oracles: the classical oracle $\ccO$ and the quantum oracle $\ccQ$, which are originally used by the algorithm, as well as the pseudo-classical oracle $\ccP$, which we introduce for proof purposes.

All three oracles involve two registers, the \emph{index} register corresponding to the set $\{1,\ldots,n\}$ and a qubit called a \emph{target} register. While a call to quantum oracle is a unitary operation acting on both registers, calls to classical and pseudo-classical oracles are non-unitary operations that start without a target register and introduce the target register as they are executed. Thus, as the algorithm proceeds, its memory use will increase.

In the following definition, $\rho$ is a linear operator on the index register, $\sigma$ is a linear operator on the joint index and target register, $i\in\{1,\ldots,n\}$, and $b\in\{0,1\}$. We define oracles $\ccO,\ccP,\ccQ$ as super-operators expressed via Kraus operators. 

\begin{defn} (Oracles $\ccO,\ccP,\ccQ$)
The classical oracle corresponding to input $x$ is defined as
\[
\ccO_x\colon \rho\mapsto \sum\nolimits_{i}O_{x,i}\rho O_{x,i}^*
\qquad\text{where}\qquad
O_{x,i}:= \pmb{i}\pmb{i}^* \otimes \pmb{x_i}.
\]
The pseudo-classical oracle corresponding to input $x$ is defined as
\[
\ccP_x\colon \rho\mapsto \sum\nolimits_{b}P_{x,b}\rho P_{x,b}^*
\qquad\text{where}\qquad
P_{x,b}:= \sum_{i\colon x_i=b}\pmb{i}\pmb{i}^* \otimes \pmb{b}.
\]
The quantum oracle corresponding to input $x$ is defined as
\[
\ccQ_x\colon \sigma\mapsto Q_x\sigma Q_x
\qquad\text{where}\qquad
Q_x=\sum_{i,b} \pmb{i}\pmb{i}^*\otimes(\pmb{b\oplus x_i})\pmb{b}^*.
\]
\end{defn}

Note that $(\pmb{b\oplus x_i})$ is $\pmb{1}$ if $x_i\ne b$ and $\pmb{0}$ otherwise.  
$Q_x$ in a $2n\times 2n$ unitary, while Kraus operators $O_{x,i}$ and $P_{x,b}$ have dimensions $2n\times n$.

Operationally, the three oracles act as follows. The classical oracle measures the index register in the standard basis, and, if the measurement outcome is $i$, it appends a qubit in state $\pmb{x_i}$ to the already existing memory.
The pseudo-classical oracle, given the state of the index register is $\pmb{i}$, appends a qubit in state $\pmb{x_i}$ to the already existing memory, and then measures this appended qubit.
The quantum oracle applies the unitary $Q_x$ to the joint index-target register.
See Figure~\ref{fig:diags} for implementations of $\ccP_x$ and $\ccO_x$ through the quantum oracle $\ccQ_x$.
Also note that the classical oracle can be simulated by the pseudo-classical oracle by measuring the content of the index register in the standard basis.

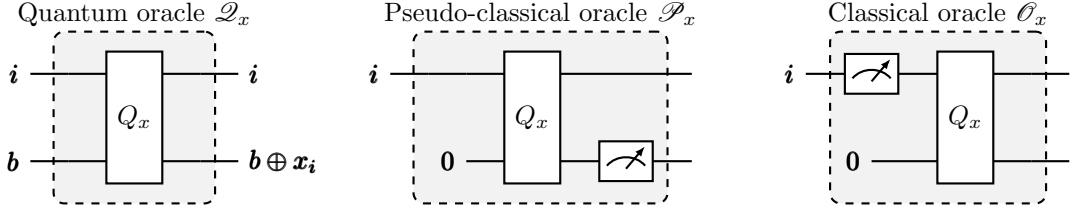
\begin{figure}[h!]
\centering
\begin{tikzcd}
\lstick{$\pmb{i}$} & \qw
\gategroup[2,steps=3,style={dashed, rounded corners, fill=gray!10, inner xsep=2pt}, background]{Quantum oracle $\ccQ_x$}
  &\gate[2]{Q_x} & \qw& \qw \rstick{$\pmb{i}$}& & & &
 \lstick{$\pmb{i}$} & \qw
\gategroup[2,steps=4,style={dashed, rounded corners, fill=gray!10, inner xsep=2pt}, background]{Pseudo-classical oracle $\ccP_x$}
 & \qw &\gate[2]{Q_x} & \qw& \qw & & &
\lstick{$\pmb{i}$} & \meter{}
\gategroup[2,steps=3,style={dashed, rounded corners, fill=gray!10, inner xsep=2pt}, background]{Classical oracle $\ccO_x$}
   &\gate[2]{Q_x} & \qw& \qw &
\\
\lstick{$\pmb{b}$} & \qw  &  & \qw & \qw \rstick{$\pmb{b\oplus x_i}$} & & & &
 & & \lstick{$\pmb{0}$}  &  & \meter{} & \qw & & &
& \lstick{$\pmb{0}$}  &  & \qw & \qw 
\end{tikzcd}
\caption{
Circuit diagrams for the three oracle types. Measurement operators denote non-destructive projective measurements on the corresponding standard bases. Quantum oracle $\ccQ_x$ corresponds to applying a unitary operator $Q_x$, and the other two oracles can be implemented via access to $Q_x$. Those oracles are the operations contained within the dashed lines.}
\label{fig:diags}
\end{figure}

We note that Grover's algorithm always calls the quantum oracle with the target register being in state $e_-:=(\pmb{0}-\pmb{1})/\sqrt{2}$, therefore effectively mapping $\pmb{i}$ of index register to $(-1)^{x_i}\pmb{i}$. We will also use the state $e_-$ when proving lower bounds.

For illustrative purposes, let us consider an example of the application of the oracles, where we treat mixed states as probabilistic mixtures of pure states.
Let the input be $x=0010\ldots0$, which has a unique marked index $3$.
For a unit vector $\alpha_3\pmb{3}+\alpha_5\pmb{5}+\alpha_6\pmb{6}$, the three oracles act as follows:
\begin{alignat*}{3}
& \ccO_x
&\;\colon\;&\hspace{3.6pt}
 \alpha_3\pmb{3}+\alpha_5\pmb{5}+\alpha_6\pmb{6}
 &\quad\mapsto\quad&
 \begin{cases}
 \pmb{3}\otimes\pmb{1}\hspace{45pt} & \text{w.p. }|\alpha_3|^2,\\
 \pmb{5}\otimes\pmb{0} & \text{w.p. }|\alpha_5|^2, \\
  \pmb{6}\otimes\pmb{0} & \text{w.p. }|\alpha_6|^2,
   \end{cases}
\\
& \ccP_x
&\;\colon\;&\hspace{3.6pt}
 \alpha_3\pmb{3}+\alpha_5\pmb{5}+\alpha_6\pmb{6}
 &\quad\mapsto\quad&
 \begin{cases}
    \pmb{3}\otimes\pmb{1}\hspace{45pt} & \text{w.p. }|\alpha_3|^2,\\
  \frac{\alpha_5\pmb{5}+\alpha_6\pmb{6}}{\sqrt{|\alpha_5|^2+|\alpha_6|^2}} \otimes \pmb{0} & \text{w.p. }|\alpha_5|^2+|\alpha_6|^2,
   \end{cases}
 \\
 & \ccQ_x
&\;\colon\;&
( \alpha_3\pmb{3}+\alpha_5\pmb{5}+\alpha_6\pmb{6})\otimes\pmb{0}
 &\quad\mapsto\quad&
 \alpha_3\,\pmb{3}\otimes\pmb{1}
 +
 (\alpha_5\pmb{5}+\alpha_6\pmb{6})\otimes\pmb{0}.
\end{alignat*}


\subsection{Hybrid Quantum-Classical Query Algorithms}

We divide the space of the algorithm into two registers: one is the index register and the other is the \emph{workspace} register that we assume to consist of some number of qubits. As we have described above, every call to the classical oracle $\ccO$ introduces an extra qubit, a target register, and we incorporate it into the workspace register. For every call to the quantum oracle $\ccQ$, we designate one already existing qubit of the workspace register as the target register. 

A hybrid quantum-classical query algorithm is specified by four components: (1) the number and order of classical and quantum queries, (2) the initial state of the algorithm, (3) input-independent unitary operators that govern the evolution of the quantum system between oracle calls, and (4) the final measurement. Let us describe these components in detail.
\begin{enumerate}
\item
We assume that the sequence in which the calls to two oracles are performed is fixed up front. In particular, it is independent from the input and we do not allow the algorithm to decide which oracle to call based on contents of some (classical) memory. 
 Let $\tau_c$ and $\tau_q$ denote, respectively, the total number of calls to the classical and the quantum oracle that the algorithm performs, and let $\tau:=\tau_c+\tau_q$. We enumerate oracle calls from $1$ to $\tau$. Let $T_c\subseteq\{1,\ldots,\tau\}$ be the set of all $t$ such that $t$-th oracle call is classical and let $T_q:=\{1,\ldots,\tau\}\setminus T_c$ be the set of all $t$ such that $t$-th oracle call is quantum; we have $|T_c|=\tau_c$ and $|T_q|=\tau_q$.
\item Let initially the workspace register consist of $\ell_0$ qubits, therefore the entire initial memory corresponds to an $n2^{\ell_0}$-dimensional Euclidean space. The initial state of the algorithm is an input-independent pure state $\psi^0$ in this space; the first oracle call is performed directly on this state.
\item
Among the first $t$ oracle calls, $|T_c\cap\{1,\ldots,t\}|$ are classical, each expanding the workspace register by a qubit. 
Let $\ell_t:=\ell_0+|T_c\cap\{1,\ldots,t\}|$. The evolution between oracle calls and after the last call is given by input-independent unitary operators $U_1,\ldots,U_\tau$, where the dimension of $U_t$ is $n2^{\ell_t}$.
\item
Given some finite set $\cA$ of answers, the final measurement is given by a set $\{\Pi_a\colon a\in\cA\}$, where each $\Pi_a$ is an orthogonal projector of dimension $n2^{\ell_0+\tau_c}$ and $\sum_a\Pi_a=I$.
\end{enumerate}
The execution of the algorithm starts in the initial state $\psi^0$, and then alternates between oracle calls and input-independent unitaries as follows. Iteratively, for $t=1,2,\ldots,\tau$, the algorithm first performs an oracle call to $\ccO_x$ if $t\in T_c$ or $\ccQ_x$ if $t\in T_q$, and then applies unitary $U_{t}$. Finally, the algorithm performs a measurement according to $\{\Pi_i\colon a\in\cA\}$, returning the measurement outcome $a$ as an answer. We say that the algorithm is \emph{successful} if $a$ is a correct answer for input $x$, and we say that it \emph{fails} otherwise.

In this paper, we use the term \emph{query lower bounds} when showing that a certain task is hard for an algorithm. Thus, we may refer to upper bounds on the success probability as query lower bounds.

\section{Lower bound framework}

For proving that the Search problem has at least a certain computational hardness, we now replace the classical oracle $\ccO$ by the pseudo-classical oracle $\ccP$. Thus, in fact, we will prove stronger versions of Theorems~\ref{thm:mainTechA} and~\ref{thm:mainTechB}, given below, which instead of the classical oracle addresses the pseudo-classical oracle. Since the pseudo-classical oracle is at least as powerful as the classical one, the strengthened Theorems~\ref{thm:mainTechA} and~\ref{thm:mainTechB} given below imply their original versions given in the introduction.

From now on, we use $\kappa$ to denote values in $\{0,1,\ldots,n\}$ and $k$ to denote values in $\{1,\ldots,n\}$.
Our proofs will only require considering inputs of Hamming weights $0$ and $1$. Let $x^{(0)}:=0^n$ and $x^{(k)}:=0^{k-1}10^{n-k}$.
Given $\kappa\in\{0,1,\ldots,n\}$, let $\ccP_\kappa$, $P_{\kappa,0}$, $\ccQ_\kappa$, $Q_{\kappa}$ be short for, respectively, $\ccP_{x^{(\kappa)}}$, $P_{x^{(\kappa)},0}$, $\ccQ_{x^{(\kappa)}}$, $Q_{x^{(\kappa)}}$.
We prove the following theorems in Sections~\ref{sec:proofH} and \ref{sec:proofAB}.

\bigskip
\noindent
{\bf{}Theorem~\ref{thm:mainTechA}} \;(Strengthened){\bf.} \emph{Suppose the algorithm runs on an input $x$ of Hamming weight $1$, makes $\tau_c$ pseudo-classical queries and $\tau_q$ quantum queries, and finds the unique marked element with probability at least $1-\epsilon$. Then $\tau_c+4\sqrt{n}\tau_q \ge n(1-2\sqrt{\epsilon}-4/n^{1/4})$.}
\bigskip

\noindent
{\bf{}Theorem~\ref{thm:mainTechB}} \;(Strengthened){\bf.} \emph{Suppose the algorithm runs on an input $x$ of Hamming weight $1$ and makes $\tau_c$ pseudo-classical queries and $\tau_q$ quantum queries. The probability that the algorithm finds the unique marked element is at most $(2\sqrt{\tau_c}+2\tau_q+1)^2/n$.}
\bigskip

We remark the only difference in the statements of these theorems, compared to their respective original versions, is that the word ``classical'' has been replaced by ``pseudo-classical''. \bigskip

We note that, in the strengthened version of Theorem~\ref{thm:mainTechB} (the one concerning pseudo-classical queries, not classical queries), the constants $2$ in front of both $\sqrt{\tau_c}$ and $\tau_q$ cannot be improved.
As shown in \cite{ambainis:semiclassical}, for Search with unique marked element and a small number of queries $\tau$, the pseudo-classical oracle can outperform the classical oracle by about a factor of $4$.
Indeed, on the one hand, if an algorithm is given only $\tau$ classical queries (and no other types of queries), the best it can do is to query the oracle on randomly guessed indices and then, if the marked index is not found, guess it randomly, thus having the success probability
$(\tau+1)/n$.
On the other hand, if we run Grover's algorithm, but use the pseudo-classical oracle instead of the quantum oracle, the success probability becomes
$1-(1-1/n)^2(1-2/n)^{2(\tau_c-1)}$,
which is approximately $4\tau/n$ for small $\tau$. 

The authors of \cite{ambainis:semiclassical} also show that this factor $4$ improvement is essentially optimal, and Theorem~\ref{thm:mainTechB} can be thought of as a strengthening of such a statement. However, if we aim for a success probability close to $1$, Theorem~\ref{thm:mainTechA} states that the advantage of the pseudo-classical oracle over the classical oracle vanishes.

\paragraph{Dominant pure components.}

Even though the memory state of the algorithm for inputs of Hamming weight $1$ might become mixed, we will show is that, essentially, for an average input, one pure component of this state dominates unless the number of queries is large. That component will correspond the probabilistic ``branch'' of the computation where the pseudo-classical oracle has not found the solution, that is, the target register introduced by it has always been in state $\pmb{0}$. Let us now formally introduce these pure components.

Let $\kappa\in\{0,1,\ldots,n\}$ indicate that the algorithm is effectively run on input $x^{(\kappa)}$. For $t\in\{0,1,\ldots,\tau\}$, define vector $\psi_\kappa^t$ as follows. Let $\psi_\kappa^0:=\psi^0$ be the initial state of the algorithm, which is the same for all $\kappa$. Then, for $t\ge 1$, recursively define $\psi_\kappa^{t}=U_tQ_\kappa\psi_\kappa^{t-1}$ if $t$-th oracle call is quantum and $\psi_\kappa^{t}=U_tP_{\kappa,0}\psi_\kappa^{t-1}$ if $t$-th oracle call is pseudo-classical.

For $\kappa=0$, we have $P_{0,0}=I\otimes\pmb{0}$ and $P_{0,1}=0$. Therefore, when we run the algorithm on $x^{(0)}=0^n$, the final state is pure, in particular, it is $\psi_0^\tau$, with the corresponding density matrix $\psi_0^\tau(\psi_0^\tau)^*$. For other inputs, $x^{(k)}$ with $k\ge 1$, the final state might be mixed, but, due to linearity, it is at least $\psi_k^\tau(\psi_k^\tau)^*$ (in the semi-definite ordering). 
Hence, in order for the algorithm to distinguish between $x^{(0)}$ and $x^{(k)}$ with high probability, we need $\psi_k^\tau$ and $\psi_0^\tau$ to be far away. More precisely, we either need $\psi_k^\tau$ to have a small norm or to be almost orthogonal to $\psi_0^\tau$, or both.

All the progress measures that we consider in this paper will be determined by the inner product $\<\psi_0^t,\psi_k^t\>$ and length $\|\psi_k^t\|$.
Figure~\ref{fig:prog} illustrates these progress measures for a very specific case; more formally we will introduce them in Sections~\ref{sec:proofH} and \ref{sec:proofAB}.

\begin{figure}[h!]
  \centering
  \includegraphics[scale=0.5]{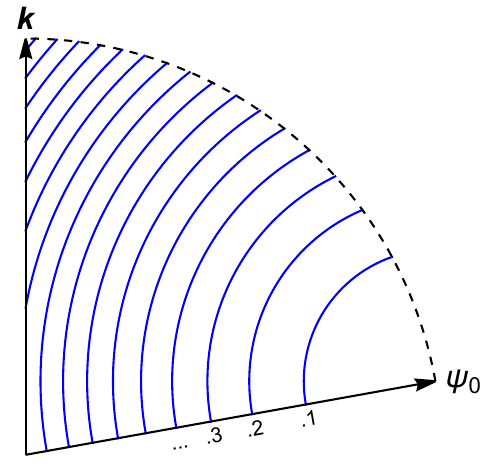}
\hspace{30pt}
\includegraphics[scale=0.5]{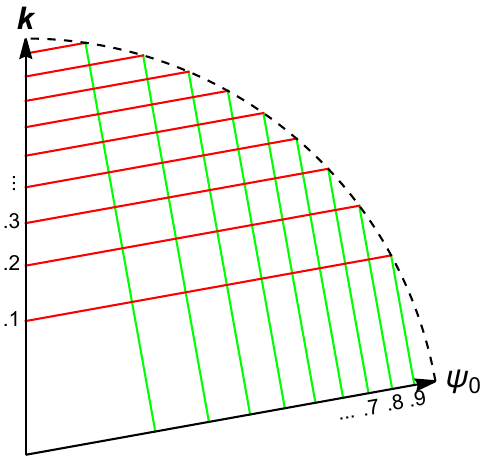}
\caption{
Progress measure $H_k$ on the left and progress measures $A_k$ (green, almost vertical lines) and $B_k$ (red, almost horizontal lines) on the right. Here, for simplicity, we assume that $\psi_k$ is a linear combination of $\psi_0=\sum_{k'=1}^{32}\pmb{k'}/\sqrt{32}$ and $\pmb{k}$ with non-negative coefficients. The progress measures are displayed in increments/decrements by $0.1$, starting from the initial case $\psi_k=\psi_0$, and the time $t$ superscripts are omitted.}
\label{fig:prog}
\end{figure}

\section{Case of Constant Success Probability}
\label{sec:proofH}

It is common to consider two versions of the Search problem: the \emph{detection} version, where one has to decide if any index is marked, and the \emph{finding} version, where one has to find a marked index assuming it exists. In this section we define a progress function $H$ and use it to prove the following lower bound for the detection version of Search.

 \begin{thm}
\label{thm:mainTechADec}
Suppose the algorithm makes $\tau_c$ pseudo-classical (or classical) queries and $\tau_q$ quantum queries, and distinguishes whether the Hamming weight of the input is $0$ or $1$ with probability at least $1-\epsilon$. Then $\tau_c+4\sqrt{n}\tau_q \ge n(1-4\sqrt{\epsilon})$.
\end{thm}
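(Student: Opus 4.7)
The plan is to define $H^t := \sum_{k=1}^n \|\psi_0^t - \psi_k^t\|^2$ and to show four facts: (i) $H^0 = 0$ (trivial, since the initial state is input-independent), (ii) each pseudo-classical query raises $H^t$ by at most $1$, (iii) each quantum query raises $H^t$ by at most $4\sqrt n$, and (iv) a success probability of at least $1-\epsilon$ forces $H^\tau \ge n(1-2\sqrt\epsilon)$.  Combining these gives $n(1-4\sqrt\epsilon) \le n(1-2\sqrt\epsilon) \le H^\tau \le \tau_c + 4\sqrt n\,\tau_q$, which is the theorem.

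For the per-query bounds, set $\xi_k := (\pmb{k}^*\otimes I)\psi_0^{t-1}$, $\chi_k := (\pmb{k}^*\otimes I)\psi_k^{t-1}$, $p_k := \|\xi_k\|$, $q_k := \|\chi_k\|$, and write $X$ for the bit-flip on the target qubit.  Step (iii) is the most delicate point: $Q_0 = I$ and $Q_0 - Q_k = \pmb{k}\pmb{k}^*\otimes(I-X)$, so expanding $\|\psi_0^t - \psi_k^t\|^2$ produces a cross term and a $\|(I-X)\chi_k\|^2$ term that can be combined using the crucial algebraic identity $(I-X)^2 = 2(I-X)$, which cancels the $\<\chi_k,(I-X)\chi_k\>$ piece of the cross term against the length-squared term and leaves the clean formula $\Delta H_k^t = 2\operatorname{Re}\<\xi_k,(I-X)\chi_k\>$.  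Cauchy--Schwarz together with $\|I-X\|_{\mathrm{op}}=2$ then gives $|\Delta H_k^t| \le 4p_kq_k$; a second Cauchy--Schwarz over $k$, combined with $\sum_k p_k^2 = \|\psi_0^{t-1}\|^2 = 1$ and $\sum_k q_k^2 \le \sum_k \|\psi_k^{t-1}\|^2 \le n$ (each $\psi_k^{t-1}$ has norm at most $1$), yields $\Delta H^t \le 4\sqrt n$.  Step (ii) is analogous but simpler: since $P_{0,0} = I\otimes\pmb 0$ and $P_{k,0} = (I-\pmb{k}\pmb{k}^*)\otimes\pmb 0$, a direct expansion gives $\Delta H_k^t = 2\operatorname{Re}\<\xi_k,\chi_k\> - q_k^2 \le 2p_kq_k - q_k^2 \le p_k^2$, so $\Delta H^t \le \sum_k p_k^2 = 1$.

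For (iv), running on $x^{(0)} = 0^n$ leaves the memory pure and equal to $\psi_0^\tau$, so $\|\Pi_0\psi_0^\tau\|^2 \ge 1-\epsilon$.  On $x^{(k)}$ the final state is $\rho_k = \psi_k^\tau(\psi_k^\tau)^* + \rho_k^{\mathrm{other}}$ with $\Tr\rho_k^{\mathrm{other}} = 1 - \|\psi_k^\tau\|^2$; the success condition $1-\epsilon \le \Tr[\Pi_1\rho_k] \le \|\Pi_1\psi_k^\tau\|^2 + (1-\|\psi_k^\tau\|^2)$ rearranges to $\|\Pi_0\psi_k^\tau\|^2 \le \epsilon$.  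Projecting through $\Pi_0$ then yields $H_k^\tau \ge (\sqrt{1-\epsilon}-\sqrt\epsilon)^2 \ge 1 - 2\sqrt\epsilon$, and summing over $k$ gives the required lower bound on $H^\tau$.

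The main technical obstacle is step (iii): without exploiting $(I-X)^2 = 2(I-X)$, the naive bound $\Delta H_k^t \le 4p_kq_k + 4q_k^2$ leaves a residue that, after summation, scales multiplicatively in $H^{t-1}$ rather than linearly in $\sqrt n$, and therefore does not produce the clean increment $4\sqrt n$ per quantum query that the theorem requires.
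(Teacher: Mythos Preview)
Your proof is correct and follows essentially the same approach as the paper: the progress function $H$ and the per-query increment bounds (your steps (ii) and (iii)) coincide with the paper's Claim~\ref{clm:progQ}, where your identity $(I-X)^2 = 2(I-X)$ is equivalent to the paper's use of the projector $\Pi_k = \pmb{k}\pmb{k}^* \otimes e_-e_-^*$ (since $I-X = 2e_-e_-^*$). The only notable difference is step (iv): the paper (Claim~\ref{clm:progFinal}) applies the Fuchs--van~de~Graaf inequality to obtain $H_k \ge 1 - 4\sqrt{\epsilon}$, whereas your direct projection argument through $\Pi_0$ yields the slightly sharper $H_k \ge 1 - 2\sqrt{\epsilon}$.
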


We note that the lower bound  $\tau_c+4\sqrt{n}\tau_q \ge n(1-4\sqrt{\epsilon})-1$ for the finding version can be obtained directly form Theorem~\ref{thm:mainTechADec}.
Indeed, if one had an algorithm for the finding version that on inputs of Hamming weight $1$ returned the correct answer $k\in\{1,\ldots,n\}$ with probability at least $1-\epsilon$, then one could turn such an algorithm into an algorithm that detects if the Hamming weight of input $x$ is $0$ or $1$ by simply performing one additional query (classical or quantum) that asks for the value of $x_k$, and then returning the Hamming weight of $x$ to be $x_k$. This new algorithm would be always correct on the input of Hamming weight $0$ and its success probability on the inputs of Hamming weight $1$ would be at least $1-\epsilon$. However, the lower bound obtained by this reduction has worse parameters than the bound given by Theorem~\ref{thm:mainTechA}.

The proofs of Theorems~\ref{thm:mainTechA} and \ref{thm:mainTechADec} are almost equivalent. We will conclude the proof of Theorem~\ref{thm:mainTechA} in Section~\ref{ssec:decProofEnd}, after we have introduced some machinery used to prove Theorem~\ref{thm:mainTechB}. This machinery, as a byproduct, provides a convenient way to relate the progress function $H$ to the success probability for the finding version of Search.

\subsection{Proof of Theorem~\ref{thm:mainTechADec}}

Define progress measures $H_k^{(t)}:=\|\psi_k^t-\psi_0^t\|^2=\|U_t^{*}\psi_k^t-U_t^{*}\psi_0^t\|^2$ and $H^{(t)}:=\sum_k H_k^{(t)}/n$. Initially, for $t=0$, we have $H^{(0)}=0$. The theorem is implied by the following two claims; Claim~\ref{clm:progQ} upper-bounds $H^{(\tau)}$ by $\tau_c/n+4\tau_q/\sqrt{n}$ while Claim~\ref{clm:progFinal} lower-bounds it by $1-4\sqrt{\epsilon}$. We note that the proofs of Claim~\ref{clm:progFinal} and the pseudo-classical oracle part of Claim~\ref{clm:progQ} closely follow \cite{regev:faultySearch}, where the authors analyzed Search with faulty oracle calls.


\begin{clm}
\label{clm:progFinal}
Suppose that for all $k\ge 0$ the algorithm returns the correct answer with probability at least $1-\epsilon$. Then $H^{(\tau)} \ge 1-4\sqrt{\epsilon}$.
\end{clm}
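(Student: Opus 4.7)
The plan is to use the fact that if the detector succeeds, then the final measurement separates the $\kappa=0$ case from every $\kappa=k\ge 1$ case, and that $H_k^{(\tau)}=\|\psi_0^\tau-\psi_k^\tau\|^2$ directly lower-bounds the Euclidean gap between the two ``dominant'' pure vectors.

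First I would record the relationship between $\psi_\kappa^\tau$ and the actual final density operator $\rho_\kappa^\tau$ produced by the algorithm. Since $P_{0,1}=0$, the pseudo-classical oracle has a single Kraus branch on input $0^n$, so $\rho_0^\tau$ equals the pure state $\psi_0^\tau(\psi_0^\tau)^*$ with $\|\psi_0^\tau\|=1$. For $k\ge 1$, a straightforward induction on $t$---quantum steps preserve the PSD order under conjugation by $U_tQ_k$, and a pseudo-classical step merely adds the extra PSD contribution $U_tP_{k,1}(\cdot)P_{k,1}^*U_t^*$---gives $\rho_k^\tau\succeq\psi_k^\tau(\psi_k^\tau)^*$ in the PSD order.

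Now let $\Pi_{\text{no}}$ denote the sum of those final projectors $\Pi_a$ that correspond to the answer ``Hamming weight zero''. The $\kappa=0$ hypothesis gives $\|\Pi_{\text{no}}\psi_0^\tau\|^2\ge 1-\epsilon$, and the $\kappa=k$ hypothesis gives $\tr[\Pi_{\text{no}}\rho_k^\tau]\le\epsilon$; together with $\Pi_{\text{no}}\succeq 0$ and the PSD domination this yields $\|\Pi_{\text{no}}\psi_k^\tau\|^2\le\epsilon$. Applying the triangle inequality inside the range of $\Pi_{\text{no}}$,
\[
\|\psi_0^\tau-\psi_k^\tau\|\ge\|\Pi_{\text{no}}\psi_0^\tau\|-\|\Pi_{\text{no}}\psi_k^\tau\|\ge\sqrt{1-\epsilon}-\sqrt{\epsilon},
\]
so $H_k^{(\tau)}\ge(\sqrt{1-\epsilon}-\sqrt{\epsilon})^2=1-2\sqrt{\epsilon(1-\epsilon)}\ge 1-4\sqrt{\epsilon}$ whenever $\epsilon<1/2$ (the claim being trivial otherwise). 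Averaging over $k\in\{1,\ldots,n\}$ preserves the inequality and yields $H^{(\tau)}\ge 1-4\sqrt{\epsilon}$.

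The only real obstacle is the PSD-domination step: one must carefully justify that when reading off the probability of the ``no'' answer on inputs $x^{(k)}$, the mixed state $\rho_k^\tau$ may legitimately be replaced by its pure surrogate $\psi_k^\tau(\psi_k^\tau)^*$. Once that is in place, the rest is a two-line Euclidean calculation in the spirit of~\cite{regev:faultySearch}.
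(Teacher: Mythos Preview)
Your argument is correct. The PSD-domination step is exactly as the paper sets it up (and you justify it cleanly), and the remaining Euclidean estimate is valid: projecting onto the ``no'' answer and applying the reverse triangle inequality gives $H_k^{(\tau)}\ge 1-2\sqrt{\epsilon(1-\epsilon)}$, which is even a bit stronger than the stated $1-4\sqrt{\epsilon}$.

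The paper takes a different route to the same inequality. Instead of isolating a single measurement projector, it passes through the trace distance: $D(\rho_0,\rho_k)\ge 1-2\epsilon$ combined with the Fuchs--van~de~Graaf inequality and $F(\rho_0,\rho_k)\ge|\<\psi_0,\psi_k\>|^2$ yields $|\<\psi_0,\psi_k\>|\le 2\sqrt{\epsilon}$, and then one expands $H_k=\|\psi_k\|^2+1-2\Re\<\psi_0,\psi_k\>\ge 1-4\sqrt{\epsilon}$. Your approach is more elementary---it avoids the trace-distance/fidelity machinery entirely and works with a concrete two-outcome measurement---and in fact loses less in the constant (you get $1-2\sqrt{\epsilon(1-\epsilon)}$ rather than $1-4\sqrt{\epsilon(1-\epsilon)}$). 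The paper's approach, on the other hand, directly controls the inner product $\<\psi_0,\psi_k\>$, which connects more transparently to the later progress measures $A_k$ and $B_k$.
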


\begin{proof}
Let us write $\psi_\kappa:=\psi_\kappa^\tau$, $H_k:= H_k^{(\tau)}$, and $H:= H^{(\tau)}$ for short.
Let density operator $\rho_\kappa$ be the final state of the algorithm when it is run on input $x^{(\kappa)}$. Note that $\rho_0=\psi_0\psi_0^*$ and $\rho_k\succeq\psi_k\psi_k$.
Since the algorithm correctly distinguishes between inputs of Hamming weights $0$ and $1$ with the error probability at most $\epsilon$, the trace distance between $\rho_0$ and $\rho_k$ is at least $1-2\epsilon$. 
Thus, by the the Fuchs--van de Graaf inequalities,
\[
1-2\epsilon \le D(\rho_0,\rho_k) \le \sqrt{1-F(\rho_0,\rho_k)} = \sqrt{1-\psi_0^*\rho_k\psi_0} 
\le \sqrt{1-|\<\psi_0,\psi_k\>|^2}.
\]
Therefore $|\<\psi_0,\psi_k\>|^2 \le 4\epsilon-4\epsilon^2$ and $|\<\psi_0,\psi_k\>|\le 2\sqrt{\epsilon}$.
To conclude, we have
\[
H_k = \|\psi_k-\psi_0\|^2 = \|\psi_k\|^2+\|\psi_0\|^2 - 2\Re \<\psi_0,\psi_k\> \ge 1-2|\<\psi_0,\psi_k\>| \ge 1-4\sqrt{\epsilon},
\]
and thus $H \ge 1-4\sqrt{\epsilon}$. 
\end{proof}


\begin{clm}
\label{clm:progQ}
We have $H^{(t)}\le H^{(t-1)} +1/n$ when $t\in T_c$ and $H^{(t)}\le H^{(t-1)} +4/\sqrt{n}$ when $t\in T_q$.
\end{clm}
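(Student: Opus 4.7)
The plan is to track the change of $H^{(t)}$ per oracle call by expanding
\[
H_k^{(t)} \;=\; \|\psi_k^t\|^2 + 1 - 2\Re\<\psi_0^t,\psi_k^t\>,
\]
using the fact that $\|\psi_0^t\| = 1$ throughout (since $Q_0 = I$ is unitary and $P_{0,0} = I\otimes\pmb{0}$ is an isometry). The structural fact to exploit is that the projectors $\Pi_k := \pmb{k}\pmb{k}^*\otimes I_{\mathrm{work}}$ are mutually orthogonal and sum to the identity on the joint (index, workspace) space.

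For a pseudo-classical query, $P_{k,0} = (I - \pmb{k}\pmb{k}^*)\otimes\pmb{0}$ while $P_{0,0} = I\otimes\pmb{0}$, so $\|\psi_k^t\|^2 = \|\psi_k^{t-1}\|^2 - \|\Pi_k\psi_k^{t-1}\|^2$ and $\<\psi_0^t,\psi_k^t\> = \<\psi_0^{t-1},\psi_k^{t-1}\> - \<\Pi_k\psi_0^{t-1},\Pi_k\psi_k^{t-1}\>$. Substituting and simplifying gives
\[
H_k^{(t)} - H_k^{(t-1)} \;=\; -\|\Pi_k\psi_k^{t-1}\|^2 + 2\Re\<\Pi_k\psi_0^{t-1},\Pi_k\psi_k^{t-1}\>,
\]
and AM--GM on the cross term cancels the negative contribution and leaves $H_k^{(t)} - H_k^{(t-1)} \le \|\Pi_k\psi_0^{t-1}\|^2$. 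Summing over $k$ telescopes this into $\|\psi_0^{t-1}\|^2 = 1$, and dividing by $n$ yields the desired $1/n$ bound.

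For a quantum query I write $Q_k = I + \Pi_k(X-I)_T$, where $(X-I)_T$ is $X - I$ acting on the target qubit (with $X$ the bit-flip). Since $Q_0 = I$ and $Q_k$ is unitary, norms are preserved, and $\<\psi_0^t,\psi_k^t\>$ changes from $\<\psi_0^{t-1},\psi_k^{t-1}\>$ by exactly $\<\psi_0^{t-1}, \Pi_k(X-I)_T\psi_k^{t-1}\>$. Thus
\[
H_k^{(t)} - H_k^{(t-1)} \;=\; -2\Re\<\Pi_k(X-I)_T\psi_0^{t-1},\psi_k^{t-1}\>.
\]
I then sum over $k$ and apply Cauchy--Schwarz to the two $k$-indexed sequences $\bigl(\Pi_k(X-I)_T\psi_0^{t-1}\bigr)_k$ and $\bigl(\psi_k^{t-1}\bigr)_k$. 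Because the $\Pi_k$'s are mutually orthogonal, the first factor $\sqrt{\sum_k\|\Pi_k(X-I)_T\psi_0^{t-1}\|^2}$ collapses to $\|(X-I)_T\psi_0^{t-1}\| \le 2$ (using $\|X - I\| = 2$ and $\|\psi_0^{t-1}\| = 1$); the second factor $\sqrt{\sum_k\|\psi_k^{t-1}\|^2}$ is at most $\sqrt{n}$ since each $\|\psi_k^{t-1}\| \le 1$. This yields $\bigl|\sum_k(H_k^{(t)} - H_k^{(t-1)})\bigr| \le 4\sqrt{n}$, hence $H^{(t)} - H^{(t-1)} \le 4/\sqrt{n}$.

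The main subtlety is applying Cauchy--Schwarz on the correct side in the quantum case: moving the operator $\Pi_k(X-I)_T$ onto the ``fixed'' vector $\psi_0^{t-1}$ so that the $k$-sum of squared norms telescopes through $\sum_k \Pi_k = I$ into a single bounded operator-norm contribution. The alternative decomposition $Q_k\psi_k^{t-1} - \psi_0^{t-1} = Q_k a_k + e_k$ with $a_k := \psi_k^{t-1} - \psi_0^{t-1}$ and $e_k := (Q_k - I)\psi_0^{t-1}$, followed by Cauchy--Schwarz on $\sum_k\<Q_k a_k, e_k\>$, instead produces a bound proportional to $\sqrt{H^{(t-1)}/n}$, which is not a clean per-step linear increment and would require an a~priori bound on $H^{(t-1)}$ to recover the stated estimate.
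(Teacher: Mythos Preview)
Your proof is correct and, for the pseudo-classical case, essentially identical to the paper's: both compute $H_k^{(t)}-H_k^{(t-1)}=-\|\Pi_k\psi_k^{t-1}\|^2+2\Re\<\Pi_k\psi_0^{t-1},\Pi_k\psi_k^{t-1}\>$ and bound it by $\|\Pi_k\psi_0^{t-1}\|^2$ via AM--GM, then sum.

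For the quantum case the two arguments differ in a small but noteworthy way. The paper passes to the $e_-$ eigenbasis of the target qubit, writing $Q_k=I-2\widetilde\Pi_k$ with the rank-one projector $\widetilde\Pi_k=\pmb{k}\pmb{k}^*\otimes e_-e_-^*$; this yields a clean per-$k$ bound $H_k^{(t)}-H_k^{(t-1)}\le 4\sqrt{\gamma_k}$, after which Cauchy--Schwarz is applied to $\sum_k\sqrt{\gamma_k}$. You instead keep $Q_k=I+\pmb{k}\pmb{k}^*\otimes(X-I)_T$, do \emph{not} bound termwise, and apply Cauchy--Schwarz directly to the sum $\sum_k\<\Pi_k(X-I)_T\psi_0,\psi_k\>$, using $\sum_k\Pi_k=I$ on the index register to collapse the first factor to $\|(X-I)_T\psi_0\|\le 2$. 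Your route avoids introducing $e_-$ and is arguably more direct; the paper's route gives an explicit per-input bound $H_k^{(t)}-H_k^{(t-1)}\le 4\sqrt{\gamma_k}$, which is occasionally useful on its own. Either way the final inequality $H^{(t)}\le H^{(t-1)}+4/\sqrt{n}$ drops out with the same constant.
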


\begin{proof}
For sake of brevity, we omit the time superscripts and write $H$, $H_k$, $\psi_k$, $\psi_0$ for $H^{(t-1)}$, $H_k^{(t-1)}$, $\psi_k^{(t-1)}$, $\psi_0^{(t-1)}$, respectively, and $H^{+}$ and $H_k^{+}$ for $H^{(t)}$ and $H_k^{(t)}$, respectively.%
\footnote{\label{fn:superPlus}%
Here the superscript $+$ in the notation $H^{+}$ and $H_k^{+}$ is meant to indicate that we are considering the values of those progress functions one time step later than their corresponding values $H$ and $H_k$, respectively. We interpret the superscript $+$ the same way in the proof of Claim~\ref{clm:ABbound}.}
For $k\in\{1,\ldots,n\}$, define the projector $\Pi_k := \pmb{k}\pmb{k}^*$ if $t\in T_c$ and $\Pi_k:=\pmb{k}\pmb{k}^*\otimes e_-e_-^*$ if $t\in T_q$. 
 Define 
$\gamma_k:=\|\Pi_k\psi_0\|^2$, which satisfy $\sum_k\gamma_k\le 1$ (with equality when $t\in T_c$), and also define $\beta_k := \|\Pi_k\psi_k\|^2\le 1$.

\bigskip

\noindent
\emph{Pseudo-classical oracle.}
Consider $t\in T_c$, and note that $P_{k,0}=(I-\Pi_k)\otimes\pmb{0}$ and $ P_{0,0}=I\otimes\pmb{0}$. We have
\begin{align*}
H_k^{+}- H_k
& = \|P_{k,0}\psi_k-P_{0,0}\psi_0\|^2 - \|\psi_k-\psi_0\|^2
\\ & = \|(I-\Pi_k)\psi_k\otimes\pmb{0}-\psi_0\otimes\pmb{0}\|^2 - \|\psi_k-\psi_0\|^2
\\ & = \|\psi_k- \Pi_k\psi_k-\psi_0\|^2 - \|\psi_k-\psi_0\|^2
\\ & = \<\psi_0-\psi_k,\,\psi_0-\psi_k\> + \<\Pi_k\psi_k,\, \psi_0 - \psi_k\> + \<\psi_0 - \psi_k,\, \Pi_k\psi_k\> + \<\Pi_k\psi_k,\,\Pi_k\psi_k\>  - \|\psi_0-\psi_k\|^2
\\ & = \<\Pi_k\psi_k,\, \psi_0 - \psi_k\> + \<\psi_0 - \psi_k,\, \Pi_k\psi_k\> + \|\Pi_k\psi_k\|^2
\\ & = \<\Pi_k\psi_k,\, \Pi_k\psi_0\> - \<\Pi_k\psi_k,\, \Pi_k\psi_k\> + \<\Pi_k\psi_0,\, \Pi_k\psi_k\> - \<\Pi_k\psi_k,\, \Pi_k\psi_k\> + \|\Pi_k\psi_k\|^2
\\ & = 2\Re\<\Pi_k\psi_k,\Pi_k\psi_0\>  - \|\Pi_k\psi_k\|^2
\\ & \le \sqrt{\beta_k} \big(2\sqrt{\gamma_k}  - \sqrt{\beta_k})
\le \gamma_k,
\end{align*}
where we have used $\Pi_k=\Pi_k^2$ for the penultimate equality.
Hence $H^{+}- H\le \sum_k\gamma_k/n = 1/n$.

\bigskip

\noindent
\emph{Quantum oracle.}
Consider $t\in T_c$, and note that $Q_{k}=I-2\Pi_k$ and $Q_0=I$.
We have
\begin{align*}
H_k^{+}- H_k
& = \|Q_k\psi_k-Q_0\psi_0\|^2 - \|\psi_k-\psi_0\|^2
\\ & = \|\psi_k- 2\Pi_k\psi_k-\psi_0\|^2 - \|\psi_k-\psi_0\|^2
\\ & = 2\<\Pi_k\psi_k,\, \psi_0 - \psi_k\> + 2\<\psi_0 - \psi_k,\, \Pi_k\psi_k\> + 4\|\Pi_k\psi_k\|^2
\\ & = 4\Re\<\Pi_k\psi_k,\Pi_k\psi_0\>  
\\ & \le 4\sqrt{\beta_k}\sqrt{\gamma_k}\le 4\sqrt{\gamma_k},
\end{align*}
and hence, by the Cauchy--Schwarz inequality, we get
$H^{+}- H \le 4\sum\nolimits_k \sqrt{\gamma_k}/n \le 4/\sqrt{n}$.
\end{proof}


\subsection{Limitations of Progress Function $H$}

While the progress function $H_k^{(t)}$ is capable of providing a proof that any algorithm that fails with a small probability needs to perform at least a certain number of queries, it is not capable of providing a tight upper bound on the success probability when the algorithm is limited to a small number of queries. To see that, consider a single step of Grover's algorithm, which starts in the state $\psi^0:=\psi\otimes e_-$, where $\psi:=\sum_k\pmb{k}/\sqrt{n}$, and applies the quantum oracle $\ccQ$ to this state. We have
\[
H_k^{(1)}=\|\psi_k^1-\psi_0^1\|^2 =\|Q_k\psi^0-\psi^0\|^2 = 4/n
\]
for all $k$, and thus $H^{(1)}=4/n$.
However, the same progress function $4/n$ would be achieved if we had $\psi_0^1=\psi$ and $\psi_k^1=(1-2/\sqrt{n})\psi$. In such a case, the state of the algorithm run on $x^{(k)}$ could be
\[
\rho_k:= \psi_k^1(\psi_k^1)^* + (1-\|\psi_k^1\|^2)\pmb{k}\pmb{k}^*,
\]
and the projective measurement $\{\Pi_0,\Pi_0^\perp\}$ with $\Pi_0=\psi\psi^*$ and $\Pi_0^\perp:=I-\Pi_0$ would recognize that state $\rho_k$ is not $\psi_0^1(\psi_0^1)^*$ with probability
\[
\tr[\Pi_0^\perp\rho_k] 
= (1-\|\psi_k^1\|^2) \|\Pi_0^\perp\pmb{k}\|^2
= (4/\sqrt{n}-4/n)(1-1/n)
= \Omega(1/\sqrt{n}).
\]

\section{Case of Subconstant Success Probability}
\label{sec:proofAB}

In this section, we prove Theorem~\ref{thm:mainTechB}. Since the progress measure $H$ cannot prove the desired bound, in Section~\ref{ssec:defABprog} we define two new progress measures, $A$ and $B$, and show how the success probability of the algorithm can be bounded through them.  Then, in the brief Section~\ref{ssec:decProofEnd}, we bound $H$ through $A$, and conclude the proof of Theorem~\ref{thm:mainTechA}. In Section~\ref{ssec:chngPerCall}, we bound by how much calls to the two types of oracles can change $A$ and $B$ depending on the current value of $B$. There is a certain way for the pseudo-classical oracle to increase progress in $A$ by having regress in $B$. We handle this tradeoff in Section~\ref{ssec:finalTradeoffs}, concluding the proof of Theorem~\ref{thm:mainTechB}.

\subsection{Two Simultaneous Progress Measures}
\label{ssec:defABprog}

For ease of notation, let us write $\psi_\kappa$ instead of $\psi_\kappa   ^t$ for all $\kappa\in\{0,1,\ldots,n\}$, keeping in mind that this vector is time $t$ dependent. Thereby, other quantities that we will define through $\psi_k$ will also depend on $t$. We will explicitly state this time-dependence when necessary.

For $k\in\{1,\ldots,n\}$, let $\alpha_k\in[0,\pi/2]$ be the angle between $\psi_k$ and $\psi_0$, more precisely, let $\alpha_k:=\arccos |\<\psi_0,\psi_k/\|\psi_k\|\>|$. Note that the progress function $H_k$ satisfies
\[
H_k= 1+\|\psi_k\|^2-2\Re\<\psi_0,\psi_k\> 
 \ge 1+\|\psi_k\|^2-2\|\psi_k\|\cos\alpha_k,
\]
which is an equality if $\<\psi_0,\psi_k\>\ge 0$.
It is reasonable to hope that progress measures capable of addressing the small success case can be defined via $\|\psi_k\|$ and $\alpha_k$.
Before we decide on specific measures, let us discuss the connection between the failure probability of the algorithm and quantities $\|\psi_k\|$ and $\alpha_k$.

Let $\Pi_k$ here be the projector on answer $k$ and let $\Pi_k^\perp:=I-\Pi_k$. For Search, the set of answers is $\{1,\ldots,n\}$, therefore $\sum_k\Pi_k=I$. The failure probability of the algorithm on input $x^{(k)}$ is at least $\|\Pi_k^\perp\psi_k^\tau\|^2$.%
\footnote{It is reasonable to assume that the failure probability is exactly $\|\Pi_k^\perp\psi_k^\tau\|^2$, because, if the pseudo-classical oracle ever returns $\pmb{1}$ in the target register, the algorithm has found the solution. More formally, in such a scenario, the index register must be in state $\pmb{k}$, and the algorithm can copy this state to a designated location in the workspace register for $k$ to be returned at the end of the computation.}
Define $\theta_k:=\arcsin\|\Pi_k\psi_0\|$, therefore $\|\Pi_k^\perp\psi_0\|=\cos\theta_k$, and also note that $\sum_k(\sin\theta_k)^2=1$.

\begin{clm}
\label{clm:kFailure}
We have $\|\Pi_k^\perp\psi_k\|\ge\max\{0,\|\psi_k\|\cos(\alpha_k+\theta_k)\}$.
\end{clm}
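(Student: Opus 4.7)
The plan is to establish the claim via a triangle inequality for angles from a vector to the fixed subspace $\mathrm{range}(\Pi_k^\perp)$. First, note that $\|\psi_0^t\|=1$ throughout the execution, because $Q_0$, $P_{0,0}=I\otimes\pmb{0}$, and every $U_t$ act as isometries on their domains. If $\|\psi_k\|=0$ the claim is trivial, so I set $\hat\psi_k:=\psi_k/\|\psi_k\|$ and reduce to proving $\|\Pi_k^\perp\hat\psi_k\|\ge\cos(\alpha_k+\theta_k)$ whenever the right-hand side is nonnegative (the bound $\|\Pi_k^\perp\hat\psi_k\|\ge 0$ is automatic).

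After multiplying $\hat\psi_k$ by a global phase, which affects neither $\alpha_k$ nor $\|\Pi_k^\perp\hat\psi_k\|$, I may assume $\<\psi_0,\hat\psi_k\>=\cos\alpha_k\ge 0$. I then decompose both vectors along the orthogonal subspaces $\mathrm{range}(\Pi_k^\perp)$ and $\mathrm{range}(\Pi_k)$, writing $\psi_0=v+w$ and $\hat\psi_k=v'+w'$ with $v,v'$ in the first subspace and $w,w'$ in the second. By the definition of $\theta_k$, $\|v\|=\cos\theta_k$ and $\|w\|=\sin\theta_k$; set $c:=\|v'\|$, so that $\|w'\|=\sqrt{1-c^2}$. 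Using $v\perp w'$ and $w\perp v'$ together with Cauchy--Schwarz gives
\[
\cos\alpha_k=\Re\<v,v'\>+\Re\<w,w'\>\le \cos\theta_k\cdot c+\sin\theta_k\sqrt{1-c^2}.
\]

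To convert this into a lower bound on $c$, write $c=\cos\gamma$ with $\gamma\in[0,\pi/2]$, so the right-hand side becomes $\cos(\gamma-\theta_k)$. Since $\alpha_k$ and $|\gamma-\theta_k|$ both lie in $[0,\pi/2]$, the resulting inequality $\cos\alpha_k\le\cos|\gamma-\theta_k|$ forces $|\gamma-\theta_k|\le\alpha_k$. If $\gamma\ge\theta_k$ this gives $\gamma\le\alpha_k+\theta_k$, and since $\cos$ is monotone decreasing on $[0,\pi]$, I obtain $c=\cos\gamma\ge\cos(\alpha_k+\theta_k)$. If instead $\gamma<\theta_k$ then $c>\cos\theta_k\ge\cos(\alpha_k+\theta_k)$ trivially. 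Multiplying through by $\|\psi_k\|$ and combining with $\|\Pi_k^\perp\psi_k\|\ge 0$ yields the claim.

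I expect no serious obstacle: the argument is essentially the triangle inequality in the Fubini--Study metric, which says that rotating $\psi_0$ by a Fubini--Study angle $\alpha_k$ cannot change its angle to the fixed subspace $\mathrm{range}(\Pi_k^\perp)$ by more than $\alpha_k$. The only mildly delicate point is the case analysis needed when passing from $\cos\alpha_k\le\cos(\gamma-\theta_k)$ to $c\ge\cos(\alpha_k+\theta_k)$, since $\gamma-\theta_k$ can have either sign.
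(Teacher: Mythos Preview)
Your proof is correct and, if anything, slightly slicker than the paper's. Both arguments reduce to unit $\hat\psi_k$ and adjust the global phase so that $\langle\psi_0,\hat\psi_k\rangle=\cos\alpha_k$, but they then take different decompositions. The paper writes $\psi_k$ in a frame adapted to $\psi_0$: it introduces the unit vector $\bar\psi_0^k$ orthogonal to $\psi_0$ inside $\spn\{\Pi_k\psi_0,\Pi_k^\perp\psi_0\}$, expresses $\psi_k=\|\psi_k\|\cos\alpha_k\,\psi_0+\omega\sqrt{(\|\psi_k\|\sin\alpha_k)^2-\|\zeta\|^2}\,\bar\psi_0^k+\zeta$ with free phase $\omega$ and residual $\zeta\perp\psi_0,\bar\psi_0^k$, computes $\Pi_k^\perp\psi_k$ explicitly, and then minimizes over $(\omega,\zeta)$, observing that the minimum is attained at $\omega=1$, $\zeta=0$. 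You instead decompose both $\psi_0$ and $\hat\psi_k$ along the orthogonal pair $(\Pi_k^\perp,\Pi_k)$, apply Cauchy--Schwarz to each block of the inner product, and cast the resulting inequality as $\cos\alpha_k\le\cos(\gamma-\theta_k)$ via the substitution $c=\cos\gamma$; this is exactly the triangle inequality for the principal-angle distance to a subspace, as you note. Your route avoids introducing $\bar\psi_0^k$ and the explicit minimization, which makes it shorter for this claim taken in isolation. The advantage of the paper's parametrization is that the very same decomposition (with the same $\bar\psi_0^k$, $\omega$, $\zeta$) is reused verbatim in the proof of Claim~\ref{clm:ABbound}, so it amortizes over two claims.
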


\begin{proof}
The claim clearly holds when $\cos(\alpha_k+\theta_k)\le 0$, so let us assume $\cos(\alpha_k+\theta_k)>0$, that is,
$\cos\alpha_k\cos\theta_k > \sin\alpha_k\sin\theta_k$.
Consider unit vectors 
\[
\psi_{0,k}:=\Pi_k\psi_0/\sin\theta_k,
\qquad
\psi_{0,k}^\perp:=\Pi_k^\perp\psi_0/\cos\theta_k,
\qqAnd
 \bar\psi_0^k:=\cos\theta_k\psi_{0,k} -\sin\theta_k\psi_{0,k}^\perp,
 \]
and note that $\bar\psi_0^k$ is orthogonal to $\psi_0$.
Without loss of generality, let $\<\psi_k,\psi_0\>\ge 0$, so $\|\psi_k\|\cos\alpha_k=\<\psi_k,\psi_0\>$ and we can express $\psi_k$ as
 \[
 \psi_k = \|\psi_k\|\cos\alpha_k\psi_0 + \omega\sqrt{(\|\psi_k\|\sin\alpha_k)^2-\|\zeta\|^2}\bar\psi_0^k + \zeta
 \]
 where $\omega$ is a complex number of unit length and $\zeta$ is a vector of norm at most $\|\psi_k\|\sin\alpha_k$ and orthogonal to both $\psi_0$ and $\bar\psi^k_0$, and, thus, their linear combination $\psi_{0,k}^\perp$.
 We have
  \[
 \Pi_k^\perp\psi_k = \big(\|\psi_k\|\cos\alpha_k\cos\theta_k - \omega\sqrt{(\|\psi_k\|\sin\alpha_k)^2-\|\zeta\|^2}\sin\theta_k\big)\psi_{0,k}^\perp + \Pi_k^\perp\zeta,
 \]
and observe that
$ \<\psi_{0,k}^\perp,\Pi_k^\perp\zeta\> = \<\Pi_k^\perp\psi_{0,k}^\perp,\zeta\>
=\<\psi_{0,k}^\perp,\zeta\> = 0$.
Since $\cos\alpha_k\cos\theta_k > \sin\alpha_k\sin\theta_k$, the quantity  $\|\Pi_k^\perp\psi_k\|$ is minimized by choosing phase $\omega=1$ and vector $\zeta=0$.
\end{proof}

Hence, the average failure probability is at least
\[
\min_\Theta \sum_k(\|\psi_k\|\max\{0,\cos(\alpha_k+\theta_k)\})^2/n
\]
where the minimization is taken over all tuples $\Theta=(\theta_1,\ldots,\theta_k)$ such that $\sum_k(\sin\theta_k)^2=1$.
One might consider this quantity as the progress function---and one might indeed have to when trying to show exact bounds for hybrid pseudo-classical and quantum query algorithms---but this quantity seems to be difficult to handle. 
Instead, let us introduce progress measures
\[
A_k :=  
 (\|\psi_k\|\cos\alpha_k)^2,
\quad
B_k :=
 (\|\psi_k\|\sin\alpha_k)^2,
 \quad
 A:=\sum\nolimits_kA_k/n,
 \qAnd
 B:=\sum\nolimits_kB_k/n,
\]
which are all functions of time $t$.

When proving lower bounds, we assume that $k\in\{1,\ldots,n\}$ is chosen uniformly at random, and the input is $x^{(k)}$. Any lower bounds proven in this average-case scenario also hold for the worst-case.

\begin{lem}
\label{lem:avgFail}
The average-case failure probability of the algorithm is at least $A^{(\tau)}-\frac{1}{n}-\frac{2\sqrt{B^{(\tau)}}}{\sqrt{n}}$.
\end{lem}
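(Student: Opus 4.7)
\medskip

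\noindent\textbf{Proof proposal for Lemma~\ref{lem:avgFail}.}
The plan is to start from the per-input bound given by Claim~\ref{clm:kFailure}, average it over $k$, and then separately control the $A$-contribution and the cross-term involving $B$.

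First I would set up convenient shorthand: write $a_k=\sqrt{A_k^{(\tau)}}=\|\psi_k^\tau\|\cos\alpha_k$, $b_k=\sqrt{B_k^{(\tau)}}=\|\psi_k^\tau\|\sin\alpha_k$, $s_k=\sin\theta_k$, and $c_k=\cos\theta_k$, so that the bound of Claim~\ref{clm:kFailure} reads
\[
\|\Pi_k^\perp\psi_k^\tau\|^2 \;\ge\; \bigl(\max\{0,\,a_kc_k-b_ks_k\}\bigr)^2.
\]
Since the failure probability on $x^{(k)}$ is at least $\|\Pi_k^\perp\psi_k^\tau\|^2$, the average-case failure probability is at least $\frac{1}{n}\sum_k(\max\{0,a_kc_k-b_ks_k\})^2$.

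Next I would invoke the elementary inequality $(\max\{0,x-y\})^2 \ge x^2-2xy$ valid for all $x,y\ge 0$ (one checks it in the two cases $x\ge y$ and $x<y$). Applied term by term this yields
\[
\tfrac{1}{n}\sum\nolimits_k\bigl(\max\{0,a_kc_k-b_ks_k\}\bigr)^2 \;\ge\; \tfrac{1}{n}\sum\nolimits_k A_k c_k^2 \;-\; \tfrac{2}{n}\sum\nolimits_k\sqrt{A_kB_k}\,c_ks_k.
\]
Since each $\psi_k^t$ is obtained from $\psi^0$ by a sequence of unitaries and contractions $P_{k,0}$, we have $\|\psi_k^\tau\|\le 1$, hence $A_k,B_k\le 1$. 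Combined with $\sum_k s_k^2=1$, this gives $\sum_k A_k s_k^2\le 1$, so the first sum is at least $\sum_k A_k/n - 1/n = A^{(\tau)} - 1/n$.

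For the cross term I would bound $\sqrt{A_k}\,c_k\le 1$ and then apply Cauchy--Schwarz with weights $\sqrt{B_k}$ and $s_k$:
\[
\sum\nolimits_k \sqrt{A_kB_k}\,c_ks_k \;\le\; \sum\nolimits_k \sqrt{B_k}\,s_k \;\le\; \sqrt{\textstyle\sum_k B_k}\cdot\sqrt{\textstyle\sum_k s_k^2} \;=\; \sqrt{nB^{(\tau)}}.
\]
Putting the two estimates together yields the desired bound $A^{(\tau)}-\tfrac{1}{n}-\tfrac{2\sqrt{B^{(\tau)}}}{\sqrt{n}}$. The only mildly delicate point is justifying the replacement of the truncated quantity $(\max\{0,\cdot\})^2$ by a bilinear expression without losing a constant; the inequality $(\max\{0,x-y\})^2\ge x^2-2xy$ does precisely this and is the one spot where a small case analysis is needed, but it is otherwise routine.
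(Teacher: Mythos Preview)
Your proof is correct and follows essentially the same route as the paper. The paper writes $\gamma_k=s_k^2$ and phrases the key inequality as $(\max\{0,x-y\})^2\ge x(x-2y)$ applied with $x=\sqrt{(1-\gamma_k)A_k}=a_kc_k$ and $y=\sqrt{\gamma_kB_k}=b_ks_k$, then uses $A_k\le 1$ and Cauchy--Schwarz exactly as you do; the only cosmetic difference is that the paper simplifies term-by-term before summing, whereas you sum first and then simplify.
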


\begin{proof}
Let $\gamma_k:=(\sin\theta_k)^2$, so, from Claim~\ref{clm:kFailure}, we get that
\[
\|\Pi_k^\perp\psi_k\|\ge\max\big\{0,\sqrt{(1-\gamma_k)A_k}-\sqrt{\gamma_kB_k}\big\}.
\]
Regardless whether $(1-\gamma_k)A_k\ge \gamma_kB_k$ or not, we have
\[
\|\Pi_k^\perp\psi_k\|^2 
\ge \sqrt{(1-\gamma_k)A_k}\big(\sqrt{(1-\gamma_k)A_k}-2\sqrt{\gamma_kB_k}\big)
\ge A_k-\gamma_k-2\sqrt{\gamma_kB_k}.
\]
Hence, by the Cauchy--Schwarz inequality, we have
\[
\frac{1}{n}\sum\nolimits_k\|\Pi_k^\perp\psi_k\|^2
\ge \frac{1}{n}\sum\nolimits_k A_k-\frac{1}{n}\sum\nolimits_k\gamma_k-\frac{2}{n}\sum\nolimits_k\sqrt{\gamma_kB_k}  
\ge A-\frac{1}{n}-\frac{2\sqrt{B}}{\sqrt{n}}.
\qedhere
\]
\end{proof}

Initially, we have $A^{(0)}=1$ and $B^{(0)}=0$. In order for algorithm to succeed, it has to reduce the quantity $A$. As we are about to discuss, increasing the quantity $B$ is also beneficial.

Queries to the quantum oracle do not change $\|\psi_k\|$ and thus leave $A+B$ unchanged. We will show in Section~\ref{ssec:chngPerCall} that quantum queries can reduce $A$ by at most about $4\sqrt{B/n}$ while increasing $B$ by the same quantity.
The situation is more complicated for pseudo-classical queries. It is clear that we can keep both $A$ and $B$ essentially unchanged by swapping useful memory with some ``junk'' memory and then feeding that junk memory to the oracle. We will show that, essentially, while pseudo-classical queries can also reduce $A$ by about $2\sqrt{B/n}$, the maximum reduction in $A$ also reduces $B$ back to close to $0$.

Thus, we may think of $B$ as a sort of potential. The quantum oracle can simultaneously increase this potential $B$ by order of $\sqrt{B/n}$ and decrease $A$ by the same amount. A query to the pseudo-classical oracle can increase $B$ by at most $1/n$, but a part $z$ of this potential $B$ can be also spent to decrease $A$ by about $2\sqrt{z/n}$.

\subsection{Conclusion for the Proof of Theorem~\ref{thm:mainTechA}}
\label{ssec:decProofEnd}

In Claim~\ref{clm:progQ} we have already bounded by how much a single query can change the progress measure $H$. Here we relate its final value to the failure probability $\epsilon$ in the finding version of Search. We omit the time superscript $\tau$ for $H,A,B,\psi$.

Observe that we have $\sqrt{H_k}=\|\psi_k-\psi_0\|\ge 1-|\<\psi_k,\psi_0\>|=1-\sqrt{A_k}$, which implies that $A_k \ge (1-\sqrt{H_k})^2$ whenever $H_k\le 1$. If we had $H_k\le 1$ for all $k$, this would in turn imply 
\[
A
\ge
 \frac{1}{n}\sum\nolimits_k (1-\sqrt{H_k})^2
 = 1 + H - \frac{2}{n}\sum\nolimits_k \sqrt{H_k} 
 \ge 1 - 2\sqrt{H} + H
 = (1 - \sqrt{H})^2,
\]
and, together with $B\le 1$, we would get the desired result by using Lemma~\ref{lem:avgFail}. However, it may happen that $H_k > 1$ for some $k$, causing additional technical hurdles.
On the bright side, intuitively, having $H_k > 1$ should not impede the proof, because, in essence, what we are trying to show is that, in order for the error to be small, the average value of $H_k$ must be at least close to $1$. We overcome these hurdles by, instead of using Lemma~\ref{lem:avgFail} directly, closely following its proof.

As in the proof of Lemma~\ref{lem:avgFail}, let $\Pi_k$ be the projector on answer $k$, let $\Pi_k^\perp:=I-\Pi_k$, and let $\gamma_k:=\|\Pi_k\psi_0\|^2$.
Hence, just as in the proof of Lemma~\ref{lem:avgFail}, we have
\[
\|\Pi_k^\perp\psi_k\|^2 
\ge A_k-\gamma_k-2\sqrt{\gamma_kB_k}
\ge A_k-3\sqrt{\gamma_k},
\]
where, for the latter inequality, we have used $B_k\le 1$.
Here, for $k$ such that $H_k\le 1$, we can bound $\|\Pi_k^\perp\psi_k\|^2$ further by using $A_k\ge (1-\sqrt{H_k})^2$.

Let 
$K' := \{k\colon H_k\le 1\} \subseteq\{1,\ldots,n\}$ and $n':=|K'|$, and let us define $H':=\sum_{k\in K'}H_k/n$, which satisfies $ H' \le n'/n$.
By the Cauchy--Schwarz inequality, we have
\begin{align*}
\epsilon
\,& =\frac{1}{n}\sum_k\|\Pi_k^\perp\psi_k\|^2
\\ &
\ge \frac{1}{n}\sum_{k\in K'}
\Big((1-\sqrt{H_k})^2-3\sqrt{\gamma_k}\Big)
\\ & =
\frac{n'}{n}+H'
 - \frac2n\sum_{k\in K'}\sqrt{H_k}-\frac3n\sum_{k\in K'}\sqrt{\gamma_k}
 \\ & \ge 
 \frac{n'}{n}+H'
 - \frac{2\sqrt{n'}}{n}\sqrt{nH'}-\frac{3\sqrt{n'}}{n}
  \\ & \ge 
\Big(\sqrt{\frac{n'}{n}}-\sqrt{H'}\Big)^2
 - \frac{3}{\sqrt{n}}.
\end{align*}
Hence, $H'$ must satisfy
\[
\sqrt{\frac{n'}{n}}-\sqrt{H'} \le \sqrt{\epsilon + 3/n^{1/2}} \le \sqrt{\epsilon} + 2/n^{1/4},
\]
and thus
\[
H' \ge \frac{n'}{n} - 2\sqrt{\epsilon} - 4/ n^{1/4}.
\]
Finally, since $H_k>1$ for $k\notin K'$, we have
\[
H = \frac{1}{n}\sum_{k\in \{1,\ldots,n\}\setminus K'}H_k + \frac{1}{n}\sum_{k\in K'}H_k
\ge \frac{n-n'}{n} + H'
\ge 1 - 2\sqrt{\epsilon} - 4/ n^{1/4}.
\]

\subsection{Change in the Progress Measures under Oracle Calls}
\label{ssec:chngPerCall}

Here we bound possible changes under a single oracle call, and the section is devoted to establishing the following claim.

\begin{clm}
\label{clm:ABbound}
If $t$-th oracle call is pseudo-classical, i.e., $t\in T_c$, there exists $\wo{t} \in [0, B^{(t-1)}]$ such that 
\[
A^{(t)}\ge A^{(t-1)}-2/n-2\sqrt{\wo{t}/n}
\qqAnd
B^{(t)}\le B^{(t-1)}-\wo{t}+1/n.
\]
If $t$-th oracle call is quantum, i.e., $t\in T_q$, we have
\[
A^{(t)}\ge A^{(t-1)}-4/n-4\sqrt{B^{(t-1)}/n}
\qqAnd
B^{(t)}\le B^{(t-1)}+4/n+4\sqrt{B^{(t-1)}/n}.
\]
\end{clm}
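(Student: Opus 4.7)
The plan is to handle both cases through the orthogonal decomposition $\psi_k^{(t-1)} = u_k\psi_0^{(t-1)} + v_k$ with $v_k\perp\psi_0^{(t-1)}$, so that $A_k^{(t-1)} = |u_k|^2$ and $B_k^{(t-1)} = \|v_k\|^2$. In both cases the relevant oracle acts on $\psi_k^{(t-1)}$ as $I - c\Lambda_k$ for a rank-structured projector $\Lambda_k$: for the quantum call $c = 2$ and $\Lambda_k = \pmb{k}\pmb{k}^*\otimes e_-e_-^*$; for the pseudo-classical call $c = 1$ and $\Lambda_k = \pmb{k}\pmb{k}^*\otimes I$ (up to a harmless appending of $\pmb{0}$ to the workspace). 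Setting $\gamma_k := \|\Lambda_k\psi_0^{(t-1)}\|^2$ and $\xi_k := \<\Lambda_k\psi_0^{(t-1)},\Lambda_k v_k\>$, one has $\sum_k\gamma_k\le 1$ in the quantum case (since $\sum_k\Lambda_k = I\otimes e_-e_-^*$) and $\sum_k\gamma_k = 1$ in the pseudo-classical case.

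For the quantum oracle, $Q_k$ is unitary and $Q_0 = I$, so $\|\psi_k^{(t)}\| = \|\psi_k^{(t-1)}\|$ and $A_k + B_k$ is conserved; hence bounding $B^{(t)} - B^{(t-1)}$ reduces to bounding $A^{(t-1)} - A^{(t)}$. Since $\<\psi_0^{(t)},\psi_k^{(t)}\> = u_k - 2(u_k\gamma_k + \xi_k)$, one gets $A_k^{(t-1)} - A_k^{(t)}\le 4|u_k|^2\gamma_k + 4|u_k||\xi_k|\le 4\gamma_k + 4|\xi_k|$. Summing, using $\sum_k\gamma_k\le 1$ together with the Cauchy--Schwarz estimate $\sum_k|\xi_k|\le\sqrt{\sum_k\gamma_k}\sqrt{\sum_k\|\Lambda_k v_k\|^2}\le\sqrt{nB^{(t-1)}}$, gives $A^{(t-1)} - A^{(t)}\le 4/n + 4\sqrt{B^{(t-1)}/n}$, after which the conservation law supplies the matching bound on $B^{(t)} - B^{(t-1)}$.

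For the pseudo-classical oracle, I would take
\[
\wo{t}\,:=\,\frac{1}{n}\sum\nolimits_k\|\Lambda_k v_k\|^2,
\]
which lies in $[0,B^{(t-1)}]$ because $\|\Lambda_k v_k\|\le\|v_k\|$. The bound on $A$ is proved exactly as in the quantum case but with $c = 1$ and $\sum_k\gamma_k = 1$, yielding $A^{(t-1)} - A^{(t)}\le 2/n + 2\sqrt{\wo{t}/n}$. For the bound on $B$, observe $\|\psi_k^{(t)}\|^2 = \|\psi_k^{(t-1)}\|^2 - \beta_k$ with $\beta_k := \|\Lambda_k\psi_k^{(t-1)}\|^2$, so $B_k^{(t-1)} - B_k^{(t)} = \beta_k - (A_k^{(t-1)} - A_k^{(t)})$. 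Expanding $\beta_k$ via the decomposition and completing the square in $\xi_k$ produces the identity
\[
\beta_k - (A_k^{(t-1)} - A_k^{(t)}) \,=\, \|\Lambda_k v_k\|^2 + |\xi_k + \gamma_k u_k|^2 - |u_k|^2\gamma_k;
\]
summing over $k$, discarding the non-negative middle square, and using $\sum_k|u_k|^2\gamma_k\le\sum_k\gamma_k = 1$ gives $B^{(t-1)} - B^{(t)}\ge\wo{t} - 1/n$, the desired inequality.

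The main technical obstacle is establishing the displayed identity with exactly a $-1/n$ (and not $-2/n$) slack after averaging; a direct Cauchy--Schwarz or triangle-inequality bound on $\beta_k - (A_k^{(t-1)} - A_k^{(t)})$ is not tight enough. The clean algebraic cancellation from squaring the combination $\xi_k + \gamma_k u_k$ is what pins down both the exact form of the identity and the specific choice of $\wo{t}$ as the averaged squared norm $\frac{1}{n}\sum_k\|\Lambda_k v_k\|^2$ rather than some coarser quantity like $B^{(t-1)}$ itself.
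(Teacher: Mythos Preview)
Your proposal is correct. Both the quantum and the pseudo-classical cases go through as you outline; in particular, the key identity
\[
\beta_k - (A_k^{(t-1)} - A_k^{(t)}) \;=\; \|\Lambda_k v_k\|^2 + |\xi_k + \gamma_k u_k|^2 - |u_k|^2\gamma_k
\]
is valid (a direct expansion confirms it), and together with $|u_k|^2\le 1$ and $\sum_k\gamma_k=1$ it yields exactly the $-1/n$ slack you need.

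Your route differs from the paper's in a meaningful but compatible way. The paper refines your two-term decomposition $\psi_k=u_k\psi_0+v_k$ into three pieces, further splitting $v_k$ along the distinguished unit vector $\bar\psi_0^k:=\sqrt{1-\gamma_k}\,\psi_{0,k}-\sqrt{\gamma_k}\,\psi_{0,k}^\perp$ and an orthogonal remainder $\zeta$. With that finer decomposition, the paper bounds $A_k^+$ and $B_k^+$ \emph{separately and directly} (obtaining $B_k^+\le \gamma_k+\|\Pi_k^\perp\zeta\|^2$), and then defines $\wo{t}:=B-\tfrac1n\sum_k\|\Pi_k^\perp\zeta\|^2$. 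Your $\wo{t}=\tfrac1n\sum_k\|\Lambda_k v_k\|^2$ is a genuinely different (slightly smaller) quantity, and you instead control $B_k^{(t-1)}-B_k^{(t)}$ through the completed-square identity rather than a direct upper bound on $B_k^+$. The paper's three-way split gives a cleaner geometric picture and closer parallelism with the earlier Claim on the failure probability; your approach is more algebraic and avoids introducing $\bar\psi_0^k$ and $\zeta$ altogether, which makes it somewhat shorter. Either choice of $\wo{t}$ is admissible for the downstream tradeoff analysis.
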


\begin{proof}
Before we address the two oracles individually, let us introduce some notation shared between the two scenarios. The proof has many resemblances to the proof of Claim~\ref{clm:kFailure}.
 For sake of brevity, let $A,B,A_k,B_k,\psi_k,\psi_0$ denote the corresponding quantities at time $t-1$, while $A^+,B^+,A_k^+,B_k^+$ denote them at time $t$ (see Footnote~\ref{fn:superPlus}).

For $k\in\{1,\ldots,n\}$, define the projector $\Pi_k := \pmb{k}\pmb{k}^*$ if $t\in T_c$ and $\Pi_k:=\pmb{k}\pmb{k}^*\otimes e_-e_-^*$ if $t\in T_q$, and let $\Pi_k^\perp:=I-\Pi_k$. Define 
$\gamma_k:=\|\Pi_k\psi_0\|^2$, and note that $\sum_k\gamma_k\le 1$ (with equality when $t\in T_c$). Let 
\[
\psi_{0,k}:=\Pi_k\psi_0/\sqrt{\gamma_k},
\qquad
\psi_{0,k}^\perp:=\Pi_k^\perp\psi_0/\sqrt{1-\gamma_k},
\qqAnd
 \bar\psi_0^k:=\sqrt{1-\gamma_k}\psi_{0,k} -\sqrt{\gamma_k}\psi_{0,k}^\perp,
\]
Without loss of generality,%
\footnote{Because neither $A_k$ nor $B_k$ depends on the global phase of $\psi_k$.}
 let $\<\psi_k,\psi_0\>\ge 0$, so $\sqrt{A_k}=\<\psi_k,\psi_0\>$ and we can express $\psi_k$ as
 \[
 \psi_k = \sqrt{A_k} \psi_0 + \omega\sqrt{B_k-\|\zeta\|^2}\bar\psi_0^k + \zeta
 \]
 where $\omega$ is a complex number of unit length and $\zeta$ is a $k$-dependent vector of norm at most $\sqrt{B_k}$ and orthogonal to both $\psi_{0}$ and $\bar\psi_{0}^{k}$. 
 Let $\zeta_k^\perp:=\Pi_k^\perp\zeta$, which is orthogonal to $\psi_{0,k}^\perp$.
 
\bigskip

\noindent
\emph{Pseudo-classical oracle.}
Let $t\in T_c$, and note that $P_{0,k}=\Pi_k^\perp\otimes\pmb{0}$. 
We have
\[
 P_{k,0}\psi_k = \Big(\sqrt{(1-\gamma_k)A_k} - \omega\sqrt{\gamma_k(B_k-\|\zeta\|^2)}\Big)\psi_{0,k}^\perp\otimes\pmb{0} + \zeta_k^\perp\otimes\pmb{0}.
\]
and $ P_{0,0}\psi_0=\psi_0\otimes\pmb{0}$.
For progress measure $A_k^+$, we have
\begin{align*}
\sqrt{A_k^+}=|\< P_{0,0}\psi_0, P_{k,0}\psi_k\>|
& = \sqrt{1-\gamma_k} 
\Big|\sqrt{(1-\gamma_k)A_k} - \omega\sqrt{\gamma_k(B_k-\|\zeta\|^2)}\Big| \\
 & \ge \sqrt{1-\gamma_k} \max\{0,\sqrt{(1-\gamma_k)A_k}-\sqrt{\gamma_k(B_k-\|\zeta_k^\perp\|^2)}\},
\end{align*}
from which, using $A_k\le 1$ and $1-\gamma_k\le 1$, we get
\[
A_k^+
\ge (1-\gamma_k)^{3/2} \sqrt{A_k}\big(\sqrt{(1-\gamma_k)A_k}-2\sqrt{\gamma_k(B_k-\|\zeta_k^\perp\|^2)}\big)
\ge A_k-2\gamma_k - 2\sqrt{\gamma_k(B_k-\|\zeta_k^\perp\|^2)}.
\]
For progress measure $B_k^+$, we have
\begin{align*}
B_k^+  = \|P_{k,0}\psi_k\|^2-A_k^+ 
& = 
 \big|\sqrt{(1-\gamma_k)A_k} - \omega\sqrt{\gamma_k(B_k-\|\zeta\|^2)}\big|^2 
 + \|\zeta_k^\perp\|^2 \\
 & \qquad
 - (1-\gamma_k)
\big|\sqrt{(1-\gamma_k)A_k} - \omega\sqrt{\gamma_k(B_k-\|\zeta\|^2)}\big|^2 \\
 & = 
\gamma_k \big|\sqrt{(1-\gamma_k)A_k} - \omega\sqrt{\gamma_k(B_k-\|\zeta\|^2)}\big|^2 
 + \|\zeta_k^\perp\|^2 \\
  & \le  \gamma_k + \|\zeta_k^\perp\|^2,
\end{align*}
where we have used that $A_k+B_k\le 1$.

Recall that $B=\sum_k B_k/n$, and define $\wo{t}:=B-\sum_k \|\zeta_k^\perp\|^2/n \ge 0$.  
By the Cauchy--Schwarz inequality, we get the claimed $A^+ \ge A - 2/n - 2 \sqrt{\wo{t}/n}$ and $B^+ \le B-\wo{t}+1/n$.

\bigskip

\noindent
\emph{Quantum oracle.}
Let $t\in T_q$, and note that $Q_{k}= I-2\Pi_k$.
We have
  \begin{align*}
\sqrt{A_k^+} = |\<\psi_0,Q_k\psi_k\>| 
 & = \big|\sqrt{A_k} - 2  \<\Pi_k\psi_0,\Pi_k\psi_k\>\big| \\
 & = \big|\sqrt{A_k} - 2  \sqrt{\gamma_k} 
 \big(\sqrt{\gamma_kA_k} + \omega\sqrt{(1-\gamma_k)(B_k-\|\zeta_k\|^2)}\big)\big| \\
 & \ge \max\big\{0,
(1-2\gamma_k)\sqrt{A_k} - 2  \sqrt{\gamma_kB_k}
 \big\},
 \end{align*} 
and therefore
  \[
A_k^+
 \ge (1-2\gamma_k)\sqrt{A_k}\big((1-2\gamma_k)\sqrt{A_k} - 4  \sqrt{\gamma_kB_k}\big)
 \ge A_k-4\gamma_k - 4  \sqrt{\gamma_kB_k}.
 \]
 By the Cauchy--Schwarz inequality, we get the claimed $A^+ \ge A - 4/n - 4 \sqrt{B/n}$. 
Since $A^++B^+=A+B$ for calls to the quantum oracle, this concludes the proof.
\end{proof}

 \subsection{Tradeoffs in the Evolution of Progress Measures}
 \label{ssec:finalTradeoffs}

The rest of the proof becomes simpler if we effectively treat inequalities of Claim~\ref{clm:ABbound} as equalities. 
Formally, let $\wo{t}$ be as in Claim~\ref{clm:ABbound}, let $a_0:=A^{(0)}=1$, $b_0:=B^{(0)}=0$, and, for all $t\in\{1,...,\tau\}$, define values $a_t$ and $b_t$ recursively as 
\[
a_t := a_{t-1} -
\begin{cases}
2/n+2\sqrt{\wo{t}/n} & \text{if }t\in T_c, \\
4/n+4\sqrt{b_{t-1}/n} & \text{if }t\in T_q,
\end{cases}
\qquad
b_t := b_{t-1} + 
\begin{cases}
1/n-\wo{t} & \text{if }t\in T_c, \\
4/n+4\sqrt{b_{t-1}/n} & \text{if }t\in T_q.
\end{cases}
\]
Note that $A^{(t)}\ge a_t$ and $B^{(t)}\le b_t$ for all $t$, and thus also $\wo{t}\le  B^{(t-1)} \le b_{t-1}$.

We want to show that the final values $a_\tau$ and $b_\tau$ cannot be far away from their initial values, $1$ and $0$, respectively. Let us start by a bound that essentially limits by how much quantum queries can increase $b_t$ and a bound on its maximum possible value over all queries.

\begin{lem}
\label{lem:SumSqB}
We have $\sum_{t\in T_q}\sqrt{b_{t-1}/n} \le \tau_q (\sqrt{\tau_c}+\tau_q-1)/n$ and $b_\tau\le(\sqrt{\tau_c}+2\tau_q)^2/n$.
\end{lem}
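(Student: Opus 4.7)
First I would pass to the rescaled quantity $c_t := \sqrt{n\,b_t}$, which turns the two recursions into a much cleaner form. For $t \in T_q$, the recursion $b_t = b_{t-1} + 4/n + 4\sqrt{b_{t-1}/n}$ becomes, after multiplying by $n$, the perfect square $n\,b_t = (c_{t-1}+2)^2$, so $c_t = c_{t-1}+2$ exactly. For $t \in T_c$, since $\wo{t}\ge 0$ we have $b_t \le b_{t-1}+1/n$, hence $c_t^2 \le c_{t-1}^2+1$. Both inequalities in the lemma can then be read off from a uniform upper bound on $c_t$.

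The plan is to prove by induction on $t$ that
\[
c_t \le \sqrt{\tau_c^{(t)}} + 2\tau_q^{(t)},
\]
where $\tau_c^{(t)} := |T_c\cap\{1,\dots,t\}|$ and $\tau_q^{(t)} := |T_q\cap\{1,\dots,t\}|$. The base case $c_0=0$ is immediate, and the quantum step just adds $2$ to both sides. The classical step is the only place that requires work: writing $\alpha = \sqrt{\tau_c^{(t-1)}}$ and $\beta = 2\tau_q^{(t-1)}$, it reduces to the algebraic inequality $\sqrt{(\alpha+\beta)^2+1} \le \sqrt{\alpha^2+1}+\beta$, which after squaring collapses to $\alpha \le \sqrt{\alpha^2+1}$. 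Intuitively, this captures the fact that classical queries are most effective at boosting $c_t$ when done before any quantum queries have already inflated it; this is the main (and really only) obstacle in the argument.

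Finally, evaluating at $t=\tau$ and squaring gives $b_\tau \le (\sqrt{\tau_c}+2\tau_q)^2/n$. For the first bound, I would enumerate the quantum query times as $t_1<t_2<\dots<t_{\tau_q}$, so $\tau_q^{(t_i-1)}=i-1$ and $\tau_c^{(t_i-1)}\le\tau_c$, whence the induction hypothesis yields $c_{t_i-1}\le\sqrt{\tau_c}+2(i-1)$. Summing,
\[
\sum_{t\in T_q}\sqrt{b_{t-1}/n}
= \frac{1}{n}\sum_{i=1}^{\tau_q} c_{t_i-1}
\le \frac{1}{n}\sum_{i=1}^{\tau_q}\bigl(\sqrt{\tau_c}+2(i-1)\bigr)
= \frac{\tau_q(\sqrt{\tau_c}+\tau_q-1)}{n},
\]
which completes the proof.
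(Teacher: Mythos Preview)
Your proof is correct. Both arguments arrive at the same pointwise bound $b_t\le(\sqrt{\tau_c^{(t)}}+2\tau_q^{(t)})^2/n$, but the paper gets there by an extremal rearrangement: it argues (informally) that for each quantum time $t_i$ the value $b_{t_i-1}$ is maximized when all pseudo-classical queries precede all quantum ones, then computes the recursion exactly in that special ordering. Your route is cleaner and more self-contained: the substitution $c_t=\sqrt{n\,b_t}$ turns the quantum step into the exact identity $c_t=c_{t-1}+2$, and your direct induction (with the little algebraic lemma $\sqrt{(\alpha+\beta)^2+1}\le\sqrt{\alpha^2+1}+\beta$ for the classical step) avoids having to justify the rearrangement at all. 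The paper's version makes the intuition ``classical queries help most if done first'' explicit; your version trades that for a fully rigorous one-line inductive step.
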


\begin{proof}
For $i\in\{1,\ldots,\tau_q\}$, let $t_i$ be the $i$-th smallest ``clock'' value in $T_q$. The largest possible value of $b_{t_1}$ is achieved when the first $\tau_c$ queries are all pseudo-classical and $\wo{t}=0$ for all of them. That is, $t_1=\tau_c+1$ and $b_{t_1-1}=\tau_c/n$. By the same argument, for every $i$, the largest possible value of $b_{t_i}$ is achieved when all pseudo-classical queries are performed before any quantum queries. Suppose that it is the case, and $b_{\tau_c}=\tau_c/n$. The remaining queries are all quantum, and, by induction, it is easy to see that $b_{\tau_c+i}=(\sqrt{\tau_c}+2i)^2/n$. Indeed, as the base case, it clearly holds for $i=0$. As for the inductive step, we have
\begin{align*}
b_{\tau_c+i} & = b_{\tau_c+i-1} + 4/n+4\sqrt{b_{\tau_c+i-1}/n} \\
 & = \big((\sqrt{\tau_c}+2i-2)^2 + 4+4(\sqrt{\tau_c}+2i-2)\big)/n \\
& =(\sqrt{\tau_c}+2i)^2/n.
\end{align*}
Hence, we have
\[
\sum_{t\in T_q}\sqrt{b_{t-1}/n}
\le \sum_{i=0}^{\tau_q-1} (\sqrt{\tau_c}+2i)/n
= \tau_q (\sqrt{\tau_c}+\tau_q-1)/n.
\]

As for maximizing the value of $b_\tau$, note that each pseudo-classical query can increase the value of $b_t$ by at most $1/n$, while, for quantum queries, the increase grows along the value of $b_t$ itself. Hence, again, for maximizing $b_\tau$, we can assume that all pseudo-classical queries precede any quantum queries, and the analysis above applies.
\end{proof}

The above lemma effectively tells us that all quantum queries in total cannot reduce the progress measure $a_t$ by more than $\mathrm{O}((\tau_q^2 + \tau_q\sqrt{\tau_c})/n)$. But it also has an additional purpose. Large values of $b_t$ can contribute to large reductions of $a_t$ under pseudo-classical queries. However, such reductions ``spend'' potential $b_t$, and Lemma~\ref{lem:SumSqB} limits by how much quantum queries can ``charge'' this potential over the whole algorithm. Thus we can show that all pseudo-classical queries in total cannot reduce the progress measure $a_t$ by more than $\mathrm{O}((\tau_c + \tau_q\sqrt{\tau_c})/n)$.

\begin{clm}
We have $\sum_{t\in T_c}\sqrt{\wo{t}/n} \le (\tau_c+2\sqrt{\tau_c}\tau_q)/n$.
\end{clm}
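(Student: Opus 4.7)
The plan is to combine the telescoping structure of the $b_t$ recursion with Cauchy--Schwarz. First, I would bound $\sum_{t\in T_c}\wo{t}$ (not its square root) by telescoping. Writing $b_\tau - b_0 = \sum_{t=1}^\tau (b_t - b_{t-1})$ and splitting the sum according to the type of each query gives
\[
b_\tau = \sum_{t\in T_c}(1/n - \wo{t}) + \sum_{t\in T_q}\bigl(4/n + 4\sqrt{b_{t-1}/n}\bigr),
\]
so rearranging and using $b_\tau\ge 0$ yields
\[
\sum_{t\in T_c}\wo{t} \;\le\; \tau_c/n + 4\tau_q/n + 4\sum_{t\in T_q}\sqrt{b_{t-1}/n}.
\]

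Next I would insert the bound $\sum_{t\in T_q}\sqrt{b_{t-1}/n}\le \tau_q(\sqrt{\tau_c}+\tau_q-1)/n$ from Lemma~\ref{lem:SumSqB}. The constant $-4\tau_q/n$ this produces cancels nicely against the $+4\tau_q/n$ from the telescoping, giving
\[
\sum_{t\in T_c}\wo{t} \;\le\; \bigl(\tau_c + 4\sqrt{\tau_c}\,\tau_q + 4\tau_q^2\bigr)/n \;=\; (\sqrt{\tau_c}+2\tau_q)^2/n.
\]

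Finally, to pass from $\sum\wo{t}$ to $\sum\sqrt{\wo{t}}$, I would apply Cauchy--Schwarz across the $\tau_c$ pseudo-classical indices:
\[
\sum_{t\in T_c}\sqrt{\wo{t}/n} \;\le\; \frac{1}{\sqrt{n}}\sqrt{\tau_c\sum_{t\in T_c}\wo{t}} \;\le\; \frac{\sqrt{\tau_c}\,(\sqrt{\tau_c}+2\tau_q)}{n} \;=\; \frac{\tau_c + 2\sqrt{\tau_c}\,\tau_q}{n},
\]
which is exactly the claimed inequality.

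I do not expect any serious obstacle here: the argument is essentially a book-keeping identity plus Cauchy--Schwarz. The one step that requires a moment's attention is the clean factorization $\tau_c + 4\sqrt{\tau_c}\tau_q + 4\tau_q^2 = (\sqrt{\tau_c}+2\tau_q)^2$, which is what allows the constants to line up and makes the final bound tight in form. The use of $b_\tau\ge 0$ is what lets us discard the end-of-algorithm potential; this is precisely the reason the preceding lemma was set up to bound $b_t$ from above while the recursion for $a_t$ is what we ultimately want controlled.
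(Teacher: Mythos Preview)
Your proposal is correct and follows essentially the same approach as the paper: bound $\sum_{t\in T_c} z_t$ first via the $b_t$ recursion together with Lemma~\ref{lem:SumSqB}, then apply Cauchy--Schwarz over the $\tau_c$ pseudo-classical queries. The only cosmetic difference is that the paper phrases the first step as a positive/negative increment split ($\sum_{\Delta_t<0}|\Delta_t|\le\sum_{\Delta_t>0}\Delta_t$) rather than a direct telescoping identity with $b_\tau\ge 0$; these are equivalent, and your version is arguably the cleaner presentation.
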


\begin{proof}
Let
\[
\Delta_t := b_t - b_{t-1} = 
\begin{cases}
1/n-\wo{t} & \text{if }t\in T_c, \\
4/n+4\sqrt{b_{t-1}/n} & \text{if }t\in T_q.
\end{cases}
\]
Since all the values $b_t$ are non-negative, we have
\begin{align*}
\sum_{t\in T_c\colon\Delta_t<0}|\Delta_t| 
& \le \sum_{t\colon\Delta_t> 0}\Delta_t \\
& = \sum_{t\in T_c\colon\Delta_t>0}\Delta_t
+ \sum_{t\in T_q} (4/n+4\sqrt{b_{t-1}/n}) \\
& \le \sum_{t\in T_c\colon\Delta_t>0}\Delta_t + 4\tau_q (\sqrt{\tau_c}+\tau_q)/n,
\end{align*}
where the last inequality is due to Lemma~\ref{lem:SumSqB}.
Hence, we have
\begin{align*}
\sum_{t\in T_c}\wo{t} 
& = \frac{\tau_c}{n} - \sum_{t\in T_c}\Delta_t \\
& = \frac{\tau_c}{n} + \sum_{t\in T_c\colon\Delta_t<0}|\Delta_t| - \sum_{t\in T_c\colon\Delta_t > 0}\Delta_t \\
& \le \frac{\tau_c+4\tau_q (\sqrt{\tau_c}+\tau_q)}{n} \\
& = (\sqrt{\tau_c}+2\tau_q)^2/n .
\end{align*}
By the Cauchy--Schwarz inequality,
\[
\sum_{t\in T_c}\sqrt{\wo{t}/n}\le \sqrt{\tau_c/n} \sqrt{\sum_{t\in T_c} \wo{t}}
\le (\tau_c+2\sqrt{\tau_c}\tau_q)/n.
\qedhere
\]
\end{proof}

Now we are ready to bound the success probability.
As a part of Lemma~\ref{lem:SumSqB}, we have already shown that $\sqrt{b_\tau}\le(\sqrt{\tau_c}+2\tau_q)/\sqrt{n}$, and now we can also bound $a_\tau$ as
\begin{align*}
a_\tau 
& = 1 - \sum_{t\in T_c}(2/n+2\sqrt{\wo{t}/n}) - \sum_{t\in T_q}(4/n+4\sqrt{b_{t-1}/n}) \\
& \ge 1 - 2\tau_c /n- 2(\tau_c+2\sqrt{\tau_c}\tau_q)/n - 4\tau_q/n-4\tau_q (\sqrt{\tau_c}+\tau_q-1)/n \\
& = 1 - 4(\sqrt{\tau_c}+\tau_q)^2/n.
\end{align*}
By Lemma~\ref{lem:avgFail}, the success probability of the algorithm is at most 
\begin{align*}
1-a_\tau+1/n + 2\sqrt{b_\tau/n}
& \le 4(\sqrt{\tau_c}+\tau_q)^2/n + 1/n + 2(\sqrt{\tau_c}+2\tau_q)/n \\
& = [4(\sqrt{\tau_c}+\tau_q)^2 + 1 + 2\sqrt{\tau_c}+4\tau_q]/n \\
& \le (2\sqrt{\tau_c}+2\tau_q+1)^2/n.
\end{align*}

\section{Open problems}

When the task is to search for a unique marked element, Zalka showed that Grover's algorithm is exactly optimal~\cite{zalka:Grover}. 
That is, given $\tau_q$ quantum queries, an algorithm can find the marked element with probability $\sin^2((1+2\tau_q)\arcsin(1/\sqrt{n}))$, and no better than that.
Can we show a similar exact bound for hybrid quantum-classical algorithms that have $\tau_c$ classical and $\tau_q$ quantum queries? In particular, can we show that the best thing an algorithm can do is, first, to randomly choose and classically query $\tau_c$ indices, and then run Grover's algorithm on the remaining $n-\tau_c$ indices?

\section*{Acknowledgements}

I would like to thank Fran\c{c}ois Le~Gall and Aleksandrs Belovs for fruitful discussions. I would like to thank Aleksandrs Belovs for bringing to my attention the result by Ambainis et al.~on semi-classical oracles.
Finally, I would like to thank anonymous reviewers for pointing out some flaws in the earlier version of this work, in particular, the proof of Theorem~\ref{thm:mainTechA} in Section~\ref{ssec:decProofEnd}.

This work was supported by JSPS KAKENHI Grant Number JP20H05966 and MEXT Quantum Leap Flagship Program (MEXT Q-LEAP) Grant Number JPMXS0120319794.

{
\small
\newcommand{\etalchar}[1]{$^{#1}$}

}

\end{document}